\newtheorem{theorem}{Theorem}
\newtheorem{algorithm}{Algorithm}
\newtheorem{corollary}[theorem]{Corollary}
\newtheorem{lemma}[theorem]{Lemma}
\newtheorem{problem}[theorem]{Problem}
\newtheorem{proposition}[theorem]{Proposition}
\newtheorem{remark}[theorem]{Remark}
\newenvironment{proof}[1][Proof]{\noindent\textbf{#1.} }{\ \rule{0.5em}{0.5em}}
\begin{document}
\title{Cooperative Wireless Sensor Network Positioning via Implicit Convex Feasibility}

\author{{\IEEEauthorblockN{Mohammad Reza Gholami, Luba Tetruashvili, Erik G. Str\"om, and Yair Censor}}\\
  \thanks{
  Copyright (c) 2013 IEEE. Personal use of this material is permitted. However, permission to use this material for any other purposes must be obtained from the IEEE by sending a request to pubs-permissions@ieee.org. M. R. Gholami and E. G. Str\"om are with the Division of
    Communication Systems, Information Theory, and Antennas,
    Department of Signals and Systems, Chalmers University of
    Technology, SE-412 96 Gothenburg, Sweden. L. Tetruashvili and
    Y. Censor are with the Department of Mathematics, University of
    Haifa, Mt. Carmel, Haifa 3190501, Israel. The contributions of the
    first two authors to this work are of equal shares.}}
\maketitle

\begin{abstract}
  We propose a distributed positioning algorithm to estimate the
  unknown positions of a number of target nodes, given distance
  measurements between target nodes and between target nodes and a
  number of reference nodes at known positions.  Based on a geometric
  interpretation, we formulate the positioning problem as an implicit
  convex feasibility problem in which some of the sets depend on the
  unknown target positions, and apply a parallel projection onto
  convex sets approach to estimate the unknown target node positions.
  The proposed technique is suitable for parallel implementation in
  which every target node in parallel can update its position and
  share the estimate of its location with other targets.  We
  mathematically prove convergence of the proposed algorithm.
  Simulation results reveal enhanced performance for the proposed
  approach compared to available techniques based on projections,
  especially for sparse networks.

\textbf{Index Terms--}  Positioning, Cooperative wireless sensor network, Parallel projections onto convex sets, Convex feasibility problem, Implicit convex feasibility.
\end{abstract}

\section{Introduction}
For many applications in wireless sensor networks (WSNs), position
information is a vital requirement for the network to function as
intended.  Due to drawbacks of using global positioning system (GPS)
receivers in sensor nodes, mainly due to limited access to GPS
satellites, e.g., in an indoor scenario, the position recovery from
the network, called positioning or localization, has been extensively
studied in the
literature~\cite{Localization_algorithm_2009,Patrwari_2005_Locating_Nodes,Sayed_2005,Sinan_2005}.
It is usually assumed that there are a number of reference sensor
nodes at known positions that can be used to estimate the location of
a number of sensor nodes at unknown positions, henceforth called
target nodes. To estimate the position of target nodes, some types of
measurements are taken between different nodes such as
received-signal-strength, angle-of-arrival, or
time-of-arrival~\cite{Localization_algorithm_2009,Lic_Mohammad,Patrwari_2005_Locating_Nodes}.
A popular technique in the positioning literature is to estimate
distances between sensor nodes from measurements and then to apply a
suitable positioning algorithm~\cite{Localization_algorithm_2009}, and
this is also the approach taken in this study.

From one point of view, positioning algorithms can be categorized into
two groups: cooperative and noncooperative~\cite{Lic_Mohammad}.  In a
noncooperative network, measurements taken between a target and
reference nodes are used to estimate the position of the target node,
while in a cooperative network, besides the measurements made between
target nodes and reference nodes, measurements collected between
target nodes are also used in the positioning
process~\cite{Lic_Mohammad}. For low density networks, the cooperation
technique can effectively improve the performance of the position
estimate~\cite{Lic_Mohammad}.  Compared to noncooperative scenarios,
the positioning problem in a cooperative network is more challenging
and it is not clear how to effectively use the distance measurements
between target nodes for the positioning process.  During the last few
years, various cooperative positioning algorithms have been proposed 
in the literature.  Classic estimators such as
the maximum likelihood (ML) estimator and nonlinear least squares
(NLS) estimator derived for the positioning problem are often too
complex and pose difficult global optimization
problems~\cite{Lic_Mohammad,Patrwari_2005_Locating_Nodes,Gholami_SPAWC_2011}.
In the literature, a number of suboptimal techniques have been
proposed to avoid the difficulty in solving the ML or the NLS problems,
such as convex relaxation techniques, e.g., based on second order cone
programming~\cite{SOCP_Tseng_2007,SOCP_Srirangarajan_2008}, sum of
squares~\cite{Sum_of_squares}, and semidefinite
programming~\cite{Amir_LS_2008,Further_Relaxation_2008,SDP_Biswas_2006},
and linearization
techniques~\cite{Mitra_LS,Ho_2009_Succesive_Asymptotically,Gholami_ICC2011,Gholami_SPAWC_2011}.

From an implementation point of view, positioning algorithms can be
categorized into two classes: centralized and distributed. In a
centralized approach, all measurements gathered in the sensor nodes are
transferred to a central unit, and a positioning algorithm is applied
to find the locations of the target nodes. In a distributed approach,
however, every target is allowed to locally process the measurements
to obtain an estimate of its own position.  To get benefits from
cooperation, target nodes can share their estimates with other targets, e.g., by
broadcasting the estimates. For example, algorithms
based on weighted-multidimensional scaling \cite{Costa-2006} and
distributed belief propagation \cite{Belief_2005_localization} have
been proposed to solve the positioning problem in distributed
fashion.  In addition to the need for a distributed implementation,
another important factor in designing a positioning algorithm is the
robustness against non-line-of-sight (NLOS) conditions, in which 
distances are measured with a large, normally positive, errors.

It is well-known that in the noncooperative case, a particular target
node can be found in the intersection of a number of balls centered
around the reference nodes, if the measurement errors are positive. In
this case, it makes sense to formulate the positioning problem as a
convex feasibility problem (CFP), i.e., the target position is
estimated as a point inside the intersection of the appropriate
balls. We can approach this by finding a point that minimizes the sum
of squared distances to the balls. In case the intersection is
nonempty, a minimizing point is obviously a solution to the CFP. In
case the intersection is empty (i.e., when the CFP is inconsistent),
the minimizing points are still reasonable estimates of the target
node positions. There exist algorithms based on projection methods that
are proven to find a minimizer to the above-mentioned objective
function, see, e.g., \cite{Lic_Mohammad,Censor_97} and references
therein.


The class of \textit{projection methods} is understood here as the
class of methods that have the property that they can reach an aim
related to the family of sets $\{C_1,C_2,\ldots,C_m\}$, such as, but
not only, solving the CFP, or solving an optimization problem with
these sets as constraints, by performing projections (orthogonal,
i.e., least Euclidean distance, or others) onto the individual sets
$C_i$. The advantage of such methods occurs in situations where
projections onto the individual sets are computationally simple to
perform. Such methods have been in recent decades extensively
investigated mathematically and used experimentally with great success
on some huge and sparse real-wold applications, consult, e.g.,
\cite{Bauschke_1996,Censor_2012} and the books
\cite{Bauschke_2011,Byrne_2008,Cegielski_2012,Censor_97,Chinneck_2007,Escalante_2011,Galanti_2004,Herman_2009}.
For further applications of projection methods,
  see, e.g., \cite{Bauschke_2013}.

In this paper, we extend the projection ideas to the cooperative case. This leads to a new type of CFP, which we henceforth call an \emph{implicit} CFP (ICFP), since some of the convex sets (balls) actually depend on the target positions. We formulate an algorithm that minimizes the sum of the square distances to a number of convex sets. We also present a mathematical analysis to support the validity of our approach in terms of convergence. Simulation results show an improved performance for the proposed algorithm compared to available techniques based on projection approaches.

It should be noted that projection techniques have been used in the
past for cooperative positioning, see, e.g.,
\cite{Gholami_Eurasip_2011,Jia_Buehrer_2011}. In fact, our proposed
scheme can be viewed as an extension of the parallel projection method
(PPM) proposed in \cite{Jia_Buehrer_2011}. However, our method is
different from the one in \cite{Jia_Buehrer_2011} in terms of
convergence.  The approach proposed in \cite{Jia_Buehrer_2011} needs
good initial points to provide good estimates of the targets positions,
while for the proposed technique in this paper, convergence to an optimal solution is
always guaranteed for any arbitrary initialization point.
There are no convergence proofs for
  the algorithms proposed in
  \cite{Gholami_Eurasip_2011,Jia_Buehrer_2011} and one aim of this
  study is to provide a framework for studying the convergence of a
  class of optimization problems with applications in wireless sensor
  network positioning.

%
%
%
%

The paper is organized as follows. In Section\,\ref{sec:sys_model}, we
describe the positioning problem based on a geometric notion. The
proposed algorithm is defined in Section~\ref{sec:algo}, and a
mathematical model of the ICFP problem is provided in
Section~\ref{eq:math_model}. Some properties of the objective function
and the convergence proof are found in
Sections~\ref{sec:prop_costfunc} and~\ref{sec:conv_anal},
respectively.  In Section~\ref{sec:sim_resul}, the performance of the
proposed algorithm is evaluated by computer simulations.

\section{Problem statement}
\label{sec:sys_model}
\subsection{System model}

We consider a $d$-dimensional cooperative network, $d=2$ or $d=3$,
consisting of a number of sensor nodes at known positions, called as
reference nodes, and a number of target nodes at unknown positions. In
this study, we assume that every target node can estimate the distance
to nearby sensor nodes, e.g., using a two-way time-of-arrival
technique.  We also assume that sensor nodes are perfectly
synchronized, meaning that measurements are no longer affected by
clock imperfections.  It is also assumed that target nodes can share
the estimates of the positions by broadcasting the estimates to
neighboring target nodes (without any round-off errors).

Let us consider a WSN consisting of $n+m$ sensor nodes distributed in
a space.  Let $x_{i}\in\mathbb{R}^{d}$ for $i\in I:=\{1,2,\ldots,n\}$
be the position of target nodes whose locations $x_{i}$ are unknown,
and let $a_{j}\in\mathbb{R}^{d}$ for $j\in J:=\{n+1,n+2,\ldots,n+m\}$
be the position of reference nodes whose locations $a_{j}$ are
\emph{a~priori} known.
To formulate a geometric positioning problem, we further define the
following sets.  For every $i\in I,$ let%
\begin{align}
\mathcal{A}_{i}=\{j\in J\mid &\text{ target node $i$ can communicate with}\nonumber\\
&~~\text{reference node } {j} \}
\end{align}
and let%
\begin{align}
\mathcal{B}_{i}=\{q\in I\mid &\text{ target node $i$ can communicate with target}\nonumber\\
&~~\text{node } {q}\}.
\end{align}
The distance measurement between a pair of nodes (target, reference)
or (target, target) is modeled as in~\cite{Sayed_2005,Sinan_Survey},
\begin{IEEEeqnarray}{rCl}\label{eq:dis_estimate}
\IEEEyesnumber\IEEEyessubnumber
&\hat{d}_{ij}=d_{ij}+\varepsilon_{ij},~j\in\mathcal{A}_i,~i\in I\IEEEyessubnumber\\
&\hat{l}_{iq}=l_{iq}+\varepsilon_{iq},~q\in\mathcal{B}_i,~i\in I\IEEEyessubnumber
\end{IEEEeqnarray}
where $\varepsilon_{ij}$ and $\varepsilon_{iq}$ are measurement errors
drawn from some distributions, and $d_{ij}=\|x_i-a_j\|$ and
$l_{iq}=\|x_i-x_q\|$ are the actual Euclidean distances between
reference node $j$ and target node $i$ and between target node $q$ and
target node $i$, respectively.  We also define the following convex
sets.  For $i\in I,$ define for each $j\in \mathcal{A}_{i},$
\begin{equation}
\label{eq:set1}
\mathcal{C}_{ij}=\{z\in\mathbb{R}^{d}\mid\left\Vert z-a_{j}\right\Vert
\leq\hat{d}_{ij}\}
\end{equation}
and for $i\in I,$ define for each $q\in \mathcal{B}_{i},$
\begin{equation}
\label{eq:set2}
\mathcal{X}_{iq}=\{z\in\mathbb{R}^{d}\mid\left\Vert z-x_{q}\right\Vert
\leq\hat{l}_{iq}\}.
\end{equation}

For convenience, we assume that the measurements are symmetric in the
sense that if $q\in\mathcal{B}_i$ then $i\in\mathcal{B}_q$ and
furthermore that $\hat l_{iq}=\hat l_{qi}$. That is, if target $i$ can measure
the distance to target $q$ then the opposite is also true. This
simplifying assumption can be motivated by extending the traditional two-way
time-of-arrival (TOA) ranging procedure to a four-way TOA process in
which
\begin{enumerate}
\item Node $i$ sends a message to node $q$
\item Node $q$ immediately returns the message to node $i$
\item Node $i$ can now  compute the initial estimate $\hat l_{iq}'$ from the TOA measurements recorded from the two first transmissions
\item Node $i$ transmits  $\hat l_{iq}'$ to node $q$
\item Node $q$ can now compute the initial estimate $\hat l_{qi}'$
  from the TOA measurements from the second and third transmissions,
  and the final estimates are computed as ${\hat l_{qi} = \hat l_{iq}
    = (\hat l_{iq}' + \hat l_{qi}')/2}$
\item Node $q$ transmits the final estimate $\hat l_{iq}$ to node $i$.
\end{enumerate}

From Eq. \eqref{eq:dis_estimate} it is clear that in the absence of
measurement errors, i.e., when $\varepsilon_{ij}=0$ and
$\varepsilon_{iq}=0$, the position of target node $i$ belongs to the
intersection of a number of closed balls centered at $a_j$ and $x_q$
with radii $d_{ij}$ and $l_{iq}$, respectively. Assuming positive
measurement errors, we use the balls defined in Eqs.~\eqref{eq:set1}
and \eqref{eq:set2}, and formulate a
feasibility problem (CFP) aimed at solving the cooperative
sensor network positioning problem as follows. 

\begin{problem}
\label{prob:sim}Find a set of vectors $x_{i}\in\mathbb{R}^{d},$ $i\in
I,$ such that
\begin{equation}
x_{i}\in\left(  \bigcap\limits_{j\in \mathcal{A}_{i}}\mathcal{C}_{i j}\right)
\bigcap\left(  \bigcap_{q\in \mathcal{B}_{i}}\mathcal{X}_{i q}\right)
.\label{eq:problem}%
\end{equation}
\end{problem}

Note that when the problem is formulated, the target nodes
$x_{i}\in\mathbb{R}^{d}$ for $i\in I,$ are not known, and thus the
sets $\mathcal{X}_{iq}$ are also unknown. For more details and other
examples of geometric positioning problems, see
\cite{Lic_Mohammad,Gholami_Eurasip_2011}.

\begin{figure}
\centering
 \psfrag{r3}[cc][][1]{$a_3$}
 \psfrag{r4}[cc][][1]{$a_4$~~}
 \psfrag{r5}[cc][][1]{$a_5$}
 \psfrag{r6}[cc][][1]{$a_6$}
 \psfrag{t1}[cc][][1]{$x_1$}
 \psfrag{t2}[cc][][1]{$x_2$}
 \psfrag{target}[c][][.8]{~~~target node }
 \psfrag{reference}[c][][.8]{reference node}
 \psfrag{sets}[c][][.8]{$\mathcal{A}_1=\{3,4\},~\mathcal{A}_2=\{5,6\},~\mathcal{B}_1=\{2\},~\mathcal{B}_2=\{1\}$}
  \includegraphics[width=90mm]{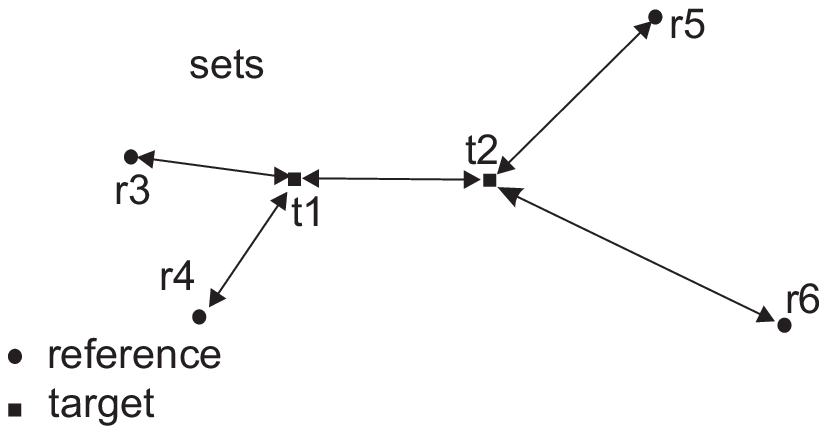}
  \caption{An example of a cooperative network consisting of two target nodes and four reference nodes.}
 \label{fig:twotargets}
\end{figure}
\begin{figure}
\centering
 \psfrag{r3}[cc][][1]{$a_3$}
 \psfrag{r4}[cc][][1]{$a_4$~~}
 \psfrag{r5}[cc][][1]{$a_5$}
 \psfrag{r6}[cc][][1]{$a_6$}
 \psfrag{t1}[cc][][1]{$x_1$}
 \psfrag{t2}[cc][][1]{$x_2$}
 \psfrag{d13}[cc][][1][90]{$\hat{d}_{13}$}
 \psfrag{d14}[cc][][1]{$\hat{d}_{14}$~}
 \psfrag{d25}[cc][][1]{$\hat{d}_{25}$~~}
 \psfrag{d26}[cc][][1]{$\hat{d}_{26}$~~}
 \psfrag{d12}[cc][][1]{$\hat{l}_{12}$}
 \psfrag{d21}[cc][][1]{$\hat{l}_{21}$}
  \psfrag{target}[c][][.8]{\qquad target node }
 \psfrag{reference}[c][][.8]{\quad~~reference node}
  \includegraphics[width=90mm]{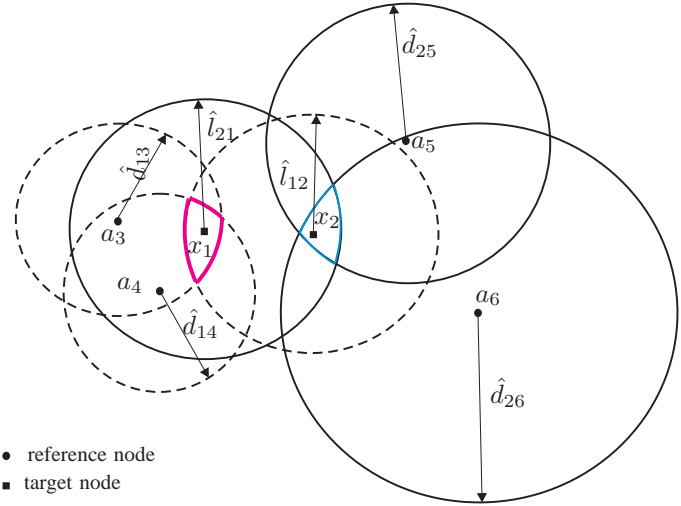}
  \caption{Deriving the intersections for target node one and target node two.
  Balls with dashed boundaries are used to derive the intersection for target node one and
  balls with solid boundaries are used to determine the intersection for target node two.
  The boundary of intersections for target node one and target node two are red and blue colored, respectively.
  Here, we assume that $\mathcal{X}_{12}$ and $\mathcal{X}_{21}$ are available for finding the red and blue areas.
  In a practical setting, however, $\mathcal{X}_{12}$ and $\mathcal{X}_{21}$ are not available from the start.}
 \label{fig:Intersection_twotargets}
\end{figure}

We clarify the formulation of Problem 1 with the following
example. Consider Fig.\,\ref{fig:twotargets} in which a 2-dimensional
cooperative network consists of four reference nodes, i.e.,
${a_i\in\mathbb{R}^2,~i\in\{3,4,5,6\}}$, and two target nodes,
$x_{i}\in\mathbb{R}^2,~i\in\{1,2\}$.  The sets $\mathcal{A}_1$,
$\mathcal{A}_2$, $\mathcal{B}_1$, and $\mathcal{B}_2$ for this network
are shown in the figure. The sets $\mathcal{C}_{ij}$ and
$\mathcal{X}_{i q}$ are defined as
\begin{IEEEeqnarray}{rCl}
\IEEEyesnumber\IEEEyessubnumber
&\mathcal{C}_{13}=\{z\in \mathbb{R}^2~|~\|z-a_3\|\leq \hat{d}_{13}\},\nonumber\\
&\mathcal{C}_{14}=\{z\in \mathbb{R}^2~|~\|z-a_4\|\leq \hat{d}_{14}\},\nonumber\\
&\mathcal{C}_{25}=\{z\in \mathbb{R}^2~|~\|z-a_5\|\leq \hat{d}_{25}\},\nonumber\\
&\mathcal{C}_{26}=\{z\in \mathbb{R}^2~|~\|z-a_6\|\leq \hat{d}_{26}\},\nonumber\\
&\mathcal{X}_{12}=\{z\in \mathbb{R}^2~|~\|z-x_2\|\leq \hat{l}_{12}\},\nonumber\\
&\mathcal{X}_{21}=\{z\in \mathbb{R}^2~|~\|z-x_1\|\leq \hat{l}_{21}\}.
\end{IEEEeqnarray}
{Suppose that distance estimates between different
  nodes are greater than the actual distances, i.e.,
  ${\hat{d}_{ij}\geq d_{ij}}$ for $i\in \{1,2\}$ and $j\in
  \{3,4,5,6\}$, and $\hat{l}_{i q}\geq l_{i q}$ for $i\in \{1,2\}$ and
  $q\in \{1,2\}$. Such a situation happens when the measurement errors
  are positive \cite{Gholami_upperbound_2012,Henk_2012_TC}.}  The intersection
involving target node one and the intersection involving target node
two are shown in Fig.\,\ref{fig:Intersection_twotargets}.  Hence, we
can write
\begin{align}
&x_1\in \Big(\mathcal{C}_{13}\bigcap \mathcal{C}_{14}\Big)\bigcap\mathcal{X}_{12},\\
&x_2\in \Big(\mathcal{C}_{25}\bigcap \mathcal{C}_{26}\Big)\bigcap\mathcal{X}_{21}.
\end{align}

{\begin{remark}
\label{rem:positive_error}
In practical applications, some of the measurement errors $\epsilon_{ij}$ and
$\epsilon_{iq}$ in \eqref{eq:dis_estimate} can be negative. 
In this case, the intersection \eqref{eq:problem} might not contain 
the target node locations, even if the intersection is nonempty. However, there
are also practical situations when the measurement errors are
positive. For example, errors in two-way time-of-arrival (TW-TOA)
measurements tend to be positive in practice, even for line-of-sight
conditions (as clarified in recent work on localization based on
practical Ultra-wideband (UWB) measurements, please see\cite{Henk_2012_TC}).  
Time-of-arrival is typically measured by computing the cross-correlation of
the received signal with a copy of the transmitted signal and finding
the first peak of the cross-correlation that exceeds
a certain threshold. Even if the threshold is carefully adjusted, the detected
peak rarely occurs before the true arrival time, especially for UWB signals. 
For more details of this
phenomenon, see~\cite{Henk_2012_TC}. 
Finally, note that we have here considered positive errors to justify the ICFP
as a means to estimate the target node positions. However, the algorithm
proposed later in this paper does not rely on this assumption. It will
function well even if errors are allowed to be negative.
\end{remark}}

\subsection{On the \textquotedblleft implicit convex feasibility
problem\textquotedblright\label{sect:implicit}}
\label{sec:ICFP}
Problem 1 is not a standard CFP because, as can be seen from
(\ref{eq:problem}), the sets $\mathcal{X}_{i q}$ are not determined
only by the data, but also on the, yet unknown, target node positions.

For CFPs with fixed feasible sets that are not dependent on the
solution there exits a broad literature, see, e.g.,~\cite{Censor_97}
or~\cite{Bauschke_Borwein_1996}.

This leads us to formulate a new class of CFPs that we call implicit CFP (ICFP).
To start, 
let $Q_{t}$ for $t=1,2,\ldots,T$ be subsets of $\mathbb{R}^{p}$ defined
for given convex functions $f_{t}:\mathbb{R}^{p}\rightarrow
\mathbb{R}$ by their zero-level-sets:
\begin{equation}\label{eq1}
    Q_{t}:=\{z\in \mathbb{R}^{p} \mid f_{t}(z)\leq 0\}.
\end{equation}

\emph{The convex feasibility problem} (CFP) in the finite-dimensional Euclidean space $\mathbb{R}^{p}$ is defined in the next Problem~2:

\emph{Problem 2}: Find a point $z^{*}$ in the set $Q:=\cap_{t=1}^{T}Q_{t}$.

Now let us consider the implicit convex feasibility problem (ICFP),
that generalizes the CFP, and which can be used to describe the
cooperative positioning problem.

Assume that $h_{t}: \mathbb{R}^{p}\times\mathbb{R}^{p}\rightarrow
\mathbb{R}$ ($t=1,2,\ldots,T$) are some given functions, such that for
any point $u\in \mathbb{R}^{p}$, the function $h_{t}(z,u)$ is convex
in the variable $z$.  For any given point $u\in \mathbb{R}^{p}$ and
$t=1,2,\ldots,T$ we define the sets
\begin{equation}\label{eq2}
  \hat{Q}_{t}(u):=\{z\in \mathbb{R}^{p}|h_{t}(z,u)\leq 0\},
\end{equation}
and look at the following ICFP:

\emph{Problem 3:} Find a point $z^{*}$ in the set
$\hat{Q}(z^{*}):=\cap_{t=1}^{T}\hat{Q}_{t}(z^{*})$.

In the ICFP problem we are seeking a common point $z^{*}$ of sets
which are defined by the unknown point $z^{*}$. Choosing
$h_{ij}(z,a_{j})=\| z-a_{j}\|-\hat{d}_{ij}$, for all $i \in I$ and
$j\in \mathcal{A}_{i}$, and
$h_{iq}(z,x_{q})=\|z-x_{q}\|-\hat{l}_{iq}$, for all $i \in I$ and
$q\in \mathcal{B}_{i}$, the ICFP translates the cooperative
positioning problem into the following problem.


\emph{Problem 4:} Find a point $(x_{1},x_{2},\ldots,x_{n})$ in
$\mathbb{R}^{dn}$ such that it satisfies following relations for all
$i=1,2,\ldots,n$:
\begin{align}
\label{eq:CFP_positioning}
x_{i} \in &\left(\bigcap_{j\in \mathcal{A}_{i}}\{z\in \mathbb{R}^{d}
  \mid \| z-a_{j}\|-\hat{d}_{ij}\leq 0 \}\right)\bigcap\nonumber\\
&\left(\bigcap_{q\in \mathcal{B}_{i}}\{z\in \mathbb{R}^{d} \mid \| z-x_{q}\|-\hat{l}_{iq}\leq 0 \}\right).
\end{align}

\section{The algorithm}
\label{sec:algo}
To solve Problem~4, as described in the previous section, we propose a
distributed algorithm based on projections onto sets
$\mathcal{C}_{ij}$ and $\mathcal{X}_{iq}$, defined in \eqref{eq:set1}
and \eqref{eq:set2}.  The notation $P_{\Omega}(u)$ stands for the
orthogonal (least Euclidean distance) projection of the point $u$ onto
the set $\Omega$. We use boldface to denote vectors in the product
space $\mathbb{R}^{dn}$ ($d=2$ or $d=3$), and denote the number of
elements in a set $D$ by $|D|$.

\begin{algorithm}
\label{alg:alg}$\left.  {}\right.  $

\noindent{\textbf{Initialization:}} Choose an arbitrary initial point
${\boldsymbol{\mathbf{x}}^{0}=(x_{1}^{0},x_{2}^{0},\ldots,x_{n}^{0})\in
\mathbb{R}^{dn}.}$

\noindent\textbf{General Iterative Step:} Given a current vector
${\boldsymbol{\mathbf{x}}^{k}=(x_{1}^{k},x_{2}^{k},\ldots,x_{n}^{k})\in
\mathbb{R}^{dn},}$

\begin{enumerate}
\item for every $i=1,2,\ldots,n,$ define%
\begin{equation}
\mathcal{X}_{iq}^{k}:=\left\{
\begin{array}
[c]{ll}%
\{z\mid\Vert z-x_{q}^{k}\mid \leq \hat{l}_{iq}\}, & \text{for
}i<q,\text{ }\\
\{z\mid\Vert z-x_{q}^{k+1}\mid \leq \hat{l}_{iq}\}, & \text{for }i>q,
\end{array}
\right.
\end{equation}
and calculate%
\begin{align}
x_{i}^{k+1}=\frac{1}{2}x_{i}^{k}&+\frac{1}{2(|\mathcal{A}_{i}|+|\mathcal{B}%
_{i}|)}\Bigg(  \sum_{j\in\mathcal{A}_{i}}P_{\mathcal{C}_{ij}}(x_{i}^{k})\nonumber\\
&+\sum
_{q\in\mathcal{B}_{i}}P_{\mathcal{X}_{iq}^{k}}(x_{i}^{k})\Bigg)  , \label{eq:nupdates}%
\end{align}

\item set%
\begin{equation}
\boldsymbol{\mathbf{x}}^{k+1}=(x_{1}^{k+1},x_{2}^{k+1},\ldots,x_{n}^{k+1}).
\end{equation}

\end{enumerate}

\noindent\textbf{Stopping criterion:} If $\|\boldsymbol{\mathbf{x}%
}^{k+1}-\boldsymbol{\mathbf{x}}^{k}\|$ is small enough then stop.
\end{algorithm}

Note that in the general iterative step there are $n$ update-steps
(\ref{eq:nupdates}) inside the $k$-th iteration, leading from
$\boldsymbol{\mathbf{x}}^{k}$ to $\boldsymbol{\mathbf{x}}^{k+1}$.  In
the $i$th update step, we determine a new approximate location $x_{i}%
^{k+1\text{ }}$of the unknown target $x_{i}$ by computing projections
onto balls around reference points and around target points for $q\neq
i$ and then taking the mid-point on the line segment between the
average of projections and the old approximate location $x_{i}^{k}$ of
the target $x_{i}$.

%

\section{The mathematical model of the problem}
\label{eq:math_model}
Let us consider the mathematical model of the problem that is obtained
by unconstrained minimization%
\begin{equation}
\left\{
\begin{array}
[c]{ll}%
\mathrm{minimize} & f({\mathbf{x}})\,\\
\mathrm{subject~ to} & {\mathbf{x}}\in \mathbb{R}^{dn}%
\end{array}
\right.  \label{eq:minprob}%
\end{equation}
of the objective function $f:\mathbb{R}^{dn}\rightarrow \mathbb{R}$
given by:%
\begin{align}
f({\mathbf{x}})\,:=&\,\sum_{i=1}^{n}\sum_{j\in\mathcal{A}_{i}}%
(\max\{\| x_{i}-a_{j}\|-\hat{d}_{ij},0\})^{2}\nonumber\\
&+\frac{1}{2}\sum_{i=1}^{n}\sum
_{q\in\mathcal{B}_{i}}(\max\{\| x_{i}-x_{q}\|-\hat{l}_{iq},0\})^{2}
\label{eq:function1}%
\end{align}
for $\boldsymbol{\mathbf{x}}=(x_{1},x_{2},\ldots,x_{n})$ and $x_{i}\in
\mathbb{R}^{d}$ for all $i=1,2,\ldots,n.$ The factor $1/2$ in (17) is
motivated by the fact that each term in the sum is repeated twice, due
to the symmetry of the measurements ($q\in\mathcal{B}_i \Rightarrow
i\in\mathcal{B}_q$ and $\hat l_{iq}=\hat l_{qi}$).  The objective
function $f$ is the sum of convex functions and is therefore
convex. Obviously, the optimal value of this optimization problem is
zero if we assume that there exists a point which satisfies all
distance requirements. Under the last mentioned existence assumption,
the motivation for introducing the optimization problem
\eqref{eq:minprob} is that any optimal solution of the problem
(\ref{eq:minprob})-(\ref{eq:function1}) is a solution of the ICFP
Problem 4.


Note that the objective function can also be rewritten equivalently in the
following form:
\begin{align}
&f(\boldsymbol{\mathbf{x}})\nonumber\\
&=\sum_{i=1}^{n}\sum_{j\in\mathcal{A}_{i}}\|
x_{i}-P_{\mathcal{C}_{ij}}(x_{i})\|^{2}+\frac{1}{2}\sum_{i=1}^{n}\sum_{q\in\mathcal{B}_{i}%
}\| x_{i}-P_{\mathcal{X}_{iq}}(x_{i})\|^{2}, \label{problem}%
\end{align}
where $P$ is the orthogonal projection operator and the sets $\mathcal{C}_{ij}$ and
$\mathcal{X}_{iq}$ are defined in \eqref{eq:set1} and \eqref{eq:set2}, respectively.

We will later show that any sequence
$\{\boldsymbol{\mathbf{x}}^{k}\}_{k=0}^{\infty}$ generated by
Algorithm \ref{alg:alg} is such that its accumulation points solve
(\ref{eq:minprob}) with the objective function (\ref{problem}).
This convergence property holds regardless if the ICFP in
\eqref{eq:CFP_positioning} is consistent or not.

{\begin{remark} There are related approaches in the
    literature to solve the positioning problem.  For example, the
    authors in \cite{Enyang_2011} consider a mini-max approach and
    obtain a positioning problem based on convex relaxation. The
    nonconvex problem originating from the least squares criterion can
    be transformed to a convex problem by adopting, e.g., a convex
    relaxation~\cite{SDP_Biswas_2006}. For sparse networks, the
    author of\cite{Sum_of_squares} formulated a positioning algorithm 
    based on sum of squares and derived a convex optimization
    problem. The main difference between most of the available methods
    in the literature and the algorithm developed in this paper, is
    that the algorithm proposed here is a projection method that has several advantages as discussed, e.g., in~\cite[Subsection 3.2]{Bauschke_2013}.
\end{remark}}

\section{Some properties of the objective function}
\label{sec:prop_costfunc}
The following proposition contains some important properties of the objective
function $f$ that are required for the convergence analysis in the sequel.

\begin{proposition}
\label{prop1}The function $f$ of (\ref{problem}) is convex, continuously
differentiable, and its gradient is block-coordinate-wise Lipschitz continuous
with constants $L_{i}\triangleq 4(|\mathcal{A}_{i}|+|\mathcal{B}_{i}|)$ for all
$i=1,2,\ldots,n.$
\end{proposition}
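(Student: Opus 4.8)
The plan is to treat $f$ of \eqref{problem} one summand at a time, exploiting the fact, already recorded there, that every term is the squared Euclidean distance from a target variable to one of the closed balls $\mathcal{C}_{ij}$ or $\mathcal{X}_{iq}$. Writing $d_{\Omega}(z):=\|z-P_{\Omega}(z)\|$ for the distance to a nonempty closed convex $\Omega$, each reference term equals $d_{\mathcal{C}_{ij}}(x_i)^2$ and each cooperative term equals $d_{\mathcal{X}_{iq}}(x_i)^2$. For \emph{convexity} it then suffices to note that $t\mapsto(\max\{t,0\})^2$ is convex and nondecreasing while $z\mapsto\|z-c\|-r$ is convex; composing the two shows each summand of \eqref{eq:function1} is convex (the cooperative terms being jointly convex in $(x_i,x_q)$, since $\|x_i-x_q\|$ is a norm of a linear map), and a finite nonnegative combination of convex functions is convex.

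For \emph{continuous differentiability} I would invoke the standard fact that, for any nonempty closed convex $\Omega\subseteq\mathbb{R}^{d}$, the function $d_{\Omega}^{2}$ is everywhere differentiable with $\nabla d_{\Omega}^{2}(z)=2\bigl(z-P_{\Omega}(z)\bigr)$, and that this gradient is continuous because $P_{\Omega}$ is non-expansive, hence continuous. The only place where differentiability is not obvious is the sphere $\|z-c\|=r$ bounding a ball, where the inner map $\|z-c\|$ has a kink; but the outer squaring smooths it, since on that sphere the exterior gradient $2(z-P_{\Omega}(z))$ vanishes and matches the identically zero interior gradient, so the two expressions agree and vary continuously. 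A finite sum of such $C^{1}$ functions is again $C^{1}$.

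For the \emph{block gradient} I would differentiate $f$ in the block $x_i$ with the other blocks held fixed, and this bookkeeping is the step I expect to be the main obstacle. The variable $x_i$ enters $f$ not only as the point of its own reference and cooperative terms, but also, because $\mathcal{X}_{pi}$ is centred at $x_i$, as the centre of every cooperative term of a neighbour $p$ with $i\in\mathcal{B}_p$. By the symmetry assumptions $q\in\mathcal{B}_i\Leftrightarrow i\in\mathcal{B}_q$ and $\hat l_{iq}=\hat l_{qi}$, the set of such neighbours is exactly $\mathcal{B}_i$, and the neighbour term $\tfrac12(\max\{\|x_p-x_i\|-\hat l_{pi},0\})^2$ contributes, via $\|x_p-x_i\|=\|x_i-x_p\|$, the gradient $\bigl(x_i-P_{\mathcal{X}_{ip}}(x_i)\bigr)$. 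This exactly duplicates the direct contribution $\bigl(x_i-P_{\mathcal{X}_{iq}}(x_i)\bigr)$ of the term with the same index pair, so the factor $\tfrac12$ is absorbed and
\[
\nabla_{x_i}f(\mathbf{x})=2\sum_{j\in\mathcal{A}_i}\bigl(x_i-P_{\mathcal{C}_{ij}}(x_i)\bigr)+2\sum_{q\in\mathcal{B}_i}\bigl(x_i-P_{\mathcal{X}_{iq}}(x_i)\bigr).
\]

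Finally, for \emph{block-coordinate-wise Lipschitz continuity} I would fix every block but $x_i$; then all the sets $\mathcal{C}_{ij}$ and $\mathcal{X}_{iq}$ above are fixed closed convex balls. For each such $\Omega$, the triangle inequality with non-expansiveness of $P_{\Omega}$ gives $\|(I-P_{\Omega})z-(I-P_{\Omega})z'\|\le\|z-z'\|+\|P_{\Omega}z-P_{\Omega}z'\|\le 2\|z-z'\|$, so $z\mapsto 2(z-P_{\Omega}(z))$ is Lipschitz with constant $4$. Summing the $|\mathcal{A}_i|+|\mathcal{B}_i|$ terms and using the triangle inequality once more yields $\|\nabla_{x_i}f(\dots,x_i,\dots)-\nabla_{x_i}f(\dots,x_i',\dots)\|\le 4(|\mathcal{A}_i|+|\mathcal{B}_i|)\,\|x_i-x_i'\|$, which is exactly $L_i=4(|\mathcal{A}_i|+|\mathcal{B}_i|)$.
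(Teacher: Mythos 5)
Your proof is correct and follows essentially the same route as the paper's: convexity via composition of the nondecreasing convex map $t\mapsto(\max\{t,0\})^{2}$ with convex distance-type functions, the symmetry $\hat l_{iq}=\hat l_{qi}$ (equivalently the paper's identity $\|x_i-P_{\mathcal{X}_{iq}}(x_i)\|=\|x_q-P_{\mathcal{X}_{qi}}(x_q)\|$) to absorb the factor $\tfrac12$ and obtain $\nabla_{x_i}f=2\sum_{j\in\mathcal{A}_i}(x_i-P_{\mathcal{C}_{ij}}(x_i))+2\sum_{q\in\mathcal{B}_i}(x_i-P_{\mathcal{X}_{iq}}(x_i))$, and nonexpansivity of the projections to get $L_i=4(|\mathcal{A}_i|+|\mathcal{B}_i|)$. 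Your treatment is in fact slightly more explicit than the paper's on two points it leaves implicit, namely the $C^{1}$ matching of the gradient of $d_{\Omega}^{2}$ across the boundary sphere and the factor-of-two bookkeeping $\|(I-P_{\Omega})z-(I-P_{\Omega})z'\|\le 2\|z-z'\|$ behind the constant $4$.
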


\begin{proof}
\noindent Convexity of the function $f$ follows from its representation in
(\ref{eq:function1}) where each summand is a composition of a square function
and a nonnegative convex max-type function and thus $f$ is convex.
Let us consider the following equivalent representation of \eqref{problem}:
$$
f(\boldsymbol{\mathbf{x}})=\sum_{i=1}^{n}\sum_{j\in\mathcal{A}_{i}}\|
x_{i}-P_{\mathcal{C}_{ij}}(x_{i})\|^{2}+\sum_{q\in\mathcal{B}_{s}%
}\| x_{s}-P_{\mathcal{X}_{sq}}(x_{s})\|^{2}$$
\begin{equation}
-\frac{1}{2}\sum_{q\in\mathcal{B}_{s}
}\| x_{s}-P_{\mathcal{X}_{sq}}(x_{s})\|^{2}+\frac{1}{2}\sum_{i=1, i\neq s}^{n}\sum_{q\in\mathcal{B}_{i}%
}\| x_{i}-P_{\mathcal{X}_{iq}}(x_{i})\|^{2}. \label{func_i}%
\end{equation}
Note that for each pair of target neighbours $i,q$ the following identity is satisfied:
\begin{equation}\label{targets}
    \| x_{i}-P_{\mathcal{X}_{iq}}(x_{i})\|=\| x_{q}-P_{\mathcal{X}_{qi}}(x_{q})\|.
\end{equation}
Substituting (\ref{targets}) into (\ref{func_i}) we get:
\begin{align}
f(\boldsymbol{\mathbf{x}})=&\sum_{i=1}^{n}\sum_{j\in\mathcal{A}_{i}}\|
x_{i}-P_{\mathcal{C}_{ij}}(x_{i})\|^{2}+\sum_{q\in\mathcal{B}_{s}%
}\| x_{s}-P_{\mathcal{X}_{sq}}(x_{s})\|^{2}\nonumber\\
&+\frac{1}{2}\sum_{i=1, i\neq s}^{n}\sum_{q\in\mathcal{B}_{i},q\neq s%
}\| x_{i}-P_{\mathcal{X}_{iq}}(x_{i})\|^{2}. \label{func_1ii}%
\end{align}
Thus, the partial derivative of
$f$ at $(x_{1},x_{2},\ldots,x_{n})$ with respect to the variables in the
subvector $x_{s}$ is given by:
\begin{align}\label{func_ii}
\nabla_{s}f(x_{1},x_{2},\ldots,x_{n})=&\sum_{j\in\mathcal{A}_{s}}%
2(x_{s}-P_{\mathcal{C}_{sj}}(x_{s}))\nonumber\\
&+\sum_{q\in\mathcal{B}_{s}}2(x_{s}-P_{\mathcal{X}_{sq}}%
(x_{s})),
\end{align}
and the gradient of $f$ is then given by
\begin{align}
&\nabla f(x_{1},x_{2},\ldots,x_{n})\nonumber\\
&=\Big(\nabla_{1}f(x_{1},x_{2},\ldots
,x_{n}),\ldots,\nabla_{n}f(x_{1},x_{2},\ldots,x_{n})\Big).
\end{align}
From continuity of $\nabla_{s}f(x_{1},x_{2},\ldots,x_{n})$ for all
${s=1,2,\ldots,n}$, we conclude that the gradient of $f$ is continuously differentiable.

It remains to show that gradient is block-coordinate-wise Lipschitz
continuous. Let us take two vectors $(x_{1},x_{2},\ldots,y_{i},\ldots,x_{n})$
and $(x_{1},x_{2},\ldots,x_{i},\ldots,x_{n})$. These two vectors are identical
except for elements which correspond to the $i$-th subvector and
we have%
\begin{align*}
&\Vert\nabla_{i}f(x_{1},x_{2},\ldots,y_{i},\ldots,x_{n})-\nabla_{i}%
f(x_{1},x_{2},\ldots,x_{i},\ldots,x_{n})\Vert\nonumber\\
&\leq4(|\mathcal{A}_{i}%
|+|\mathcal{B}_{i}|)\Vert x_{i}-y_{i}\Vert,
\end{align*}
where in the last inequality, we used the nonexpansivity of the projection operator
$P$, i.e.,
\[{\|P(x)-P(y)\| \le \|x-y\|}.\]
Thus, we have shown that the gradient of $f$ is block-coordinate-wise
Lipschitz continuous with constants ${4(|\mathcal{A}_{i}|+|\mathcal{B}_{i}|)}$
for each block.
\end{proof}

\section{Convergence analysis}
\label{sec:conv_anal}
In Proposition \ref{prop1} we have shown that many, but not all, assumptions required
for the convergence analysis of the Block Coordinate Gradient Descent (BCGD)
algorithm in \cite{Beck_Tetruashvili_2011} are satisfied.
In what follows, we present an alternative and self-contained proof of
convergence of Algorithm~\ref{alg:alg}, which does not require Lipschitz
continuity of the gradients of the function $f$, as required in the proof of the
BCGD algorithm in \cite{Beck_Tetruashvili_2011}.
In our analysis we will make use of following results.

\begin{lemma}
\label{lem:descent}(\textbf{Descent Lemma}, Appendix A in \cite{Bertsekas-book-1995}) If ${g:\mathbb{R}^{p}\rightarrow \mathbb{R}}$ is a
continuously differentiable function whose gradient $\nabla g$ is Lipschitz
continuous with constant $L$ then%
\begin{equation}
g(y)\leq g(x)+\langle\nabla g(x),y -x \rangle+(L/2)\| x- y\|
^{2}~\text{for all}\,\, x,y\in \mathbb{R}^{p}.
\end{equation}

\end{lemma}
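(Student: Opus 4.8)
The plan is to reduce this multivariate inequality to a one-dimensional integral along the line segment joining $x$ and $y$, and then estimate that integral using the Lipschitz bound on $\nabla g$. First I would fix arbitrary $x,y\in\mathbb{R}^{p}$ and introduce the auxiliary scalar function $\phi:[0,1]\rightarrow\mathbb{R}$ defined by $\phi(t)=g(x+t(y-x))$. Since $g$ is continuously differentiable, the chain rule shows that $\phi$ is continuously differentiable on $[0,1]$ with $\phi'(t)=\langle\nabla g(x+t(y-x)),\,y-x\rangle$.

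Next I would apply the fundamental theorem of calculus to write $g(y)-g(x)=\phi(1)-\phi(0)=\int_{0}^{1}\phi'(t)\,dt$. Subtracting the linear term $\langle\nabla g(x),y-x\rangle$, which can itself be written as $\int_{0}^{1}\langle\nabla g(x),y-x\rangle\,dt$, from both sides yields the exact remainder identity
\[
g(y)-g(x)-\langle\nabla g(x),y-x\rangle=\int_{0}^{1}\langle\nabla g(x+t(y-x))-\nabla g(x),\,y-x\rangle\,dt.
\]

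To bound the right-hand side I would apply the Cauchy--Schwarz inequality to the integrand pointwise in $t$, followed by the Lipschitz hypothesis $\|\nabla g(u)-\nabla g(v)\|\le L\|u-v\|$ with $u=x+t(y-x)$ and $v=x$, which gives $\|\nabla g(x+t(y-x))-\nabla g(x)\|\le L\,t\,\|y-x\|$. Hence the integrand is at most $L\,t\,\|y-x\|^{2}$, and integrating over $t\in[0,1]$ contributes the factor $\int_{0}^{1}t\,dt=1/2$, producing exactly the bound $(L/2)\|y-x\|^{2}$. Rearranging recovers the stated inequality.

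There is no genuine obstacle here, as this is a classical result. The only points requiring a little care are keeping the direction of the inequality consistent (the estimate bounds the remainder from above, which is precisely the ``descent'' direction the lemma asserts) and applying Cauchy--Schwarz inside the integral, before integrating in $t$, rather than afterward. Since $x$ and $y$ were arbitrary, the conclusion holds for all $x,y\in\mathbb{R}^{p}$.
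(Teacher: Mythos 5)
Your proof is correct and complete: the paper itself offers no proof of this lemma, merely citing Appendix A of Bertsekas, and your argument (parametrize along the segment, apply the fundamental theorem of calculus, then Cauchy--Schwarz and the Lipschitz bound on the remainder integral, picking up the factor $\int_0^1 t\,dt = 1/2$) is precisely the classical proof found in that reference.
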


\begin{proposition}
\label{prop:prop3}For any $\boldsymbol{\mathbf{x}}^{0}\in \mathbb{R}^{dn}$ and for the
function $f$ of (\ref{problem}) the level set%
\begin{equation}
S=\{\boldsymbol{\mathbf{x}}\mid f(\boldsymbol{\mathbf{x}})\leq
f(\boldsymbol{\mathbf{x}}^{0})\}
\end{equation}
is bounded.
\end{proposition}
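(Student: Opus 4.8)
The plan is to show that the function $f$ is coercive, i.e.\ that $f(\mathbf{x})\to\infty$ whenever $\|\mathbf{x}\|\to\infty$. Since $f$ is continuous (indeed continuously differentiable, by Proposition~\ref{prop1}), coercivity immediately yields that every sublevel set $S$ is bounded. Rather than argue coercivity abstractly, I would extract explicit bounds on each block $x_i$ directly from the defining inequality $f(\mathbf{x})\le f(\mathbf{x}^{0})$.

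First I would exploit that $f$ is a sum of non-negative terms, so on $S$ every individual summand is bounded by $f(\mathbf{x}^{0})$. From the reference terms this gives, for each $i$ and each $j\in\mathcal{A}_{i}$, that $\max\{\|x_{i}-a_{j}\|-\hat d_{ij},0\}\le\sqrt{f(\mathbf{x}^{0})}$, whence $\|x_{i}-a_{j}\|\le\hat d_{ij}+\sqrt{f(\mathbf{x}^{0})}$ and therefore, by the triangle inequality, $\|x_{i}\|\le\|a_{j}\|+\hat d_{ij}+\sqrt{f(\mathbf{x}^{0})}$. Likewise the target--target terms, accounting for the factor $1/2$, give for each $i$ and each $q\in\mathcal{B}_{i}$ the bound $\|x_{i}-x_{q}\|\le\hat l_{iq}+\sqrt{2f(\mathbf{x}^{0})}$.

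With these two families of bounds in hand, the remaining task is to control $\|x_{i}\|$ for \emph{every} target, including those not adjacent to any reference node (i.e.\ with $\mathcal{A}_{i}=\emptyset$). For such a target I would chain the target--target distance bounds along a path $i=i_{0},i_{1},\ldots,i_{L}$ in the communication graph that terminates at some target $i_{L}$ adjacent to a reference node, writing
\begin{equation}
\|x_{i}\|\le\sum_{r=1}^{L}\|x_{i_{r-1}}-x_{i_{r}}\|+\|x_{i_{L}}\|,
\end{equation}
and then bounding each consecutive distance by $\hat l_{i_{r-1}i_{r}}+\sqrt{2f(\mathbf{x}^{0})}$ and the last term by the reference bound of the previous paragraph. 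Since the number of targets $n$ is finite, all path lengths and constants are finite, so this produces a uniform bound on each $\|x_{i}\|$, and hence on $\|\mathbf{x}\|$, over all of $S$.

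The hard part will be precisely this last step: the chaining argument works only if each target is joined, through a sequence of target--target links, to at least one reference node. If instead some target or cluster of targets were disconnected from the entire reference set, that cluster could be translated arbitrarily far while leaving all distances appearing in $f$ unchanged, so $f$ would remain constant and $S$ would be unbounded. Thus the result hinges on the natural (and here implicit) connectivity assumption that the graph on $I\cup J$ induced by $\{\mathcal{A}_{i}\}$ and $\{\mathcal{B}_{i}\}$ connects every target to the reference nodes; I would make this hypothesis explicit, after which the boundedness of $S$ follows from the finite chain of triangle inequalities above.
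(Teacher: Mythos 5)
Your proof is correct and, in substance, more complete than the paper's own. The paper disposes of Proposition~\ref{prop:prop3} in one line: it asserts that $f$ is coercive and concludes that all level sets are bounded, offering no justification for coercivity. Your proposal supplies exactly the missing content, and by the same route the paper intends: on $S$ every nonnegative summand of $f$ is at most $f(\mathbf{x}^{0})$, which yields $\|x_{i}-a_{j}\|\leq\hat d_{ij}+\sqrt{f(\mathbf{x}^{0})}$ for $j\in\mathcal{A}_{i}$ and $\|x_{i}-x_{q}\|\leq\hat l_{iq}+\sqrt{2f(\mathbf{x}^{0})}$ for $q\in\mathcal{B}_{i}$, and these are then chained along paths in the communication graph. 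More importantly, you have correctly spotted that the bare assertion of coercivity is false in general: if a target, or a cluster of targets linked only to one another, has no path through target--target links to any target that hears a reference node, then $f$ is invariant under rigid translation of that cluster, so $f$ is not coercive and $S$ is unbounded (in the extreme case $\mathcal{A}_{i}=\mathcal{B}_{i}=\emptyset$ the variable $x_{i}$ does not appear in $f$ at all). The connectivity hypothesis you propose to make explicit --- every target is joined, through the graph induced by the sets $\mathcal{A}_{i}$ and $\mathcal{B}_{i}$, to at least one reference node --- is genuinely required here, and it propagates: Corollary~\ref{cor1} and the main convergence theorem (Theorem~9) rest on the boundedness established in this proposition. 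In short, you follow the paper's high-level strategy (coercivity implies bounded sublevel sets) but execute it rigorously where the paper merely asserts, and in doing so you surface an assumption the paper leaves implicit; the only thing to add is to state that connectivity hypothesis as a formal standing assumption rather than as a remark inside the proof.
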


\begin{proof}
\noindent The function $f$ of (\ref{problem}) is coercive, i.e., for all $\boldsymbol{\mathbf{x}%
}$ such that $\|\boldsymbol{\mathbf{x}}\|\rightarrow\infty$ we
have $f(\boldsymbol{\mathbf{x}})\rightarrow\infty$. Thus, all level sets of $f$
are bounded.
\end{proof}

In the next proposition we show that any sequence generated by Algorithm
\ref{alg:alg} entails a non-increasing~sequence of function values
$\{f(\boldsymbol{\mathbf{x}}^{k})\}_{k=0}^{\infty}$.

\begin{proposition}
  \label{prop2} Let $\{x_{i}^{k}\}_{k=0}^{\infty}$ for all
  $i=1,2,\ldots,n,$ be sequences generated by Algorithm \ref{alg:alg}
  and let $\{f(\boldsymbol{\mathbf{x}}^{k})\}_{k=0}^{\infty}$ be the
  associated sequences of function values. Denote, for $0\leq i \leq
  n$,%
\begin{equation}
\boldsymbol{\mathbf{x}}^{k,i}:=(x_{1}^{k},x_{2}^{k},\ldots,x_{i}^{k}%
,x_{i+1}^{k-1},\ldots,x_{n}^{k-1}).
\end{equation}
Then, for every $k=1,2,\ldots,$ we have%
\begin{equation}
f(\boldsymbol{\mathbf{x}}^{k})\leq f(\boldsymbol{\mathbf{x}}^{k-1})
\label{eq27}%
\end{equation}
and%
\begin{equation}
\lim_{k\rightarrow\infty}\Vert\nabla_{i}f(\boldsymbol{\mathbf{x}}%
^{k,i-1})\Vert=0,\,\,\text{\thinspace for all}\,\,\,i=1,2,\ldots,n,
\label{eq5}%
\end{equation}
and%
\begin{equation}
\lim_{k\rightarrow\infty}\Vert\boldsymbol{\mathbf{x}}^{k,i}%
-\boldsymbol{\mathbf{x}}^{k,i-1}\Vert=0,\,\,\,\text{for all}\,\,\,i=1,2,\ldots
,n. \label{eq7}%
\end{equation}

\end{proposition}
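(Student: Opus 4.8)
The plan is to recognize that Algorithm~\ref{alg:alg} is exactly a Gauss--Seidel block-coordinate gradient descent method on $f$ with block step sizes $1/L_i$, and then to feed the block-coordinate-wise Lipschitz estimate of Proposition~\ref{prop1} into the Descent Lemma (Lemma~\ref{lem:descent}) to extract a quantified per-block decrease. First I would rewrite the update \eqref{eq:nupdates} in gradient form. Passing from $\boldsymbol{\mathbf{x}}^{k,i-1}$ to $\boldsymbol{\mathbf{x}}^{k,i}$ alters only the $i$-th block, replacing $x_i^{k-1}$ by $x_i^k$, and a direct rearrangement of \eqref{eq:nupdates} using the partial-gradient formula \eqref{func_ii} gives
\[
x_i^k-x_i^{k-1}=-\frac{1}{L_i}\,\nabla_i f(\boldsymbol{\mathbf{x}}^{k,i-1}),\qquad L_i=4(|\mathcal{A}_i|+|\mathcal{B}_i|).
\]
The crucial consistency point is that the centers used in the sets $\mathcal{X}_{iq}^{k}$ of the update---$x_q^k$ for $q<i$ and $x_q^{k-1}$ for $q>i$---are precisely the neighbor coordinates frozen in $\boldsymbol{\mathbf{x}}^{k,i-1}$, so the projections in \eqref{eq:nupdates} are exactly those appearing in $\nabla_i f(\boldsymbol{\mathbf{x}}^{k,i-1})$. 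This identity also ties together the two limits, since $\|\boldsymbol{\mathbf{x}}^{k,i}-\boldsymbol{\mathbf{x}}^{k,i-1}\|=\|x_i^k-x_i^{k-1}\|=L_i^{-1}\|\nabla_i f(\boldsymbol{\mathbf{x}}^{k,i-1})\|$, so \eqref{eq7} will follow immediately once \eqref{eq5} is established.

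Next I would apply the Descent Lemma one block at a time. Fixing all blocks except the $i$-th, the restriction $y\mapsto f(x_1^k,\ldots,x_{i-1}^k,y,x_{i+1}^{k-1},\ldots,x_n^{k-1})$ is continuously differentiable with gradient $\nabla_i f$, which by Proposition~\ref{prop1} is Lipschitz with constant $L_i$. Lemma~\ref{lem:descent} applied to this one-block restriction at the pair $(x_i^{k-1},x_i^k)$, together with the step-size identity above, produces the sufficient-decrease inequality
\[
f(\boldsymbol{\mathbf{x}}^{k,i})\leq f(\boldsymbol{\mathbf{x}}^{k,i-1})-\frac{1}{2L_i}\,\|\nabla_i f(\boldsymbol{\mathbf{x}}^{k,i-1})\|^2 .
\]
The step size $1/L_i$ is exactly the value that turns the $-\tfrac{1}{L_i}\|\nabla_i f\|^2$ and $+\tfrac{1}{2L_i}\|\nabla_i f\|^2$ contributions into the net $-\tfrac{1}{2L_i}\|\nabla_i f\|^2$ above.

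I would then chain these inequalities over $i=1,\ldots,n$. Since $\boldsymbol{\mathbf{x}}^{k,0}=\boldsymbol{\mathbf{x}}^{k-1}$ and $\boldsymbol{\mathbf{x}}^{k,n}=\boldsymbol{\mathbf{x}}^{k}$, summing the per-block bounds telescopes to
\[
f(\boldsymbol{\mathbf{x}}^{k})\leq f(\boldsymbol{\mathbf{x}}^{k-1})-\sum_{i=1}^{n}\frac{1}{2L_i}\,\|\nabla_i f(\boldsymbol{\mathbf{x}}^{k,i-1})\|^2 ,
\]
which proves \eqref{eq27} at once, since the subtracted sum is nonnegative. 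For \eqref{eq5} I would telescope this in $k$: summing from $k=1$ to $K$ bounds $\sum_{k=1}^{K}\sum_{i=1}^{n}\tfrac{1}{2L_i}\|\nabla_i f(\boldsymbol{\mathbf{x}}^{k,i-1})\|^2$ by $f(\boldsymbol{\mathbf{x}}^{0})-f(\boldsymbol{\mathbf{x}}^{K})\leq f(\boldsymbol{\mathbf{x}}^{0})$, where $f\geq 0$ because it is a sum of squares. Hence the nonnegative double series converges, its general term tends to zero, and \eqref{eq5} follows; \eqref{eq7} then follows from the step-size identity.

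The main obstacle, and the step on which everything hinges, is the bookkeeping in the first paragraph: one must verify carefully that the Gauss--Seidel ordering of the update---using already-refreshed neighbors $x_q^k$ for $q<i$ and not-yet-refreshed neighbors $x_q^{k-1}$ for $q>i$---makes the gradient $\nabla_i f(\boldsymbol{\mathbf{x}}^{k,i-1})$ driving the Descent Lemma coincide with the combination of projections actually computed in \eqref{eq:nupdates}. Once this identification is pinned down, the block-wise Descent Lemma and the two telescoping sums are routine.
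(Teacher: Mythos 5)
Your proposal is correct and follows essentially the same route as the paper's proof: identify the update \eqref{eq:nupdates} as a block gradient step $x_i^k-x_i^{k-1}=-(1/L_i)\nabla_i f(\boldsymbol{\mathbf{x}}^{k,i-1})$, feed the block-wise Lipschitz constants of Proposition~\ref{prop1} into the Descent Lemma to obtain the per-block sufficient decrease, and telescope first over $i$ and then over $k$. The only cosmetic difference is that the paper collapses the per-block constants to $\tau=\max_i L_i$ before summing, whereas you keep the individual $1/(2L_i)$ factors; your explicit verification that the centers in $\mathcal{X}_{iq}^{k}$ match the frozen coordinates of $\boldsymbol{\mathbf{x}}^{k,i-1}$ is a point the paper asserts without detail.
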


\begin{proof}
\noindent Obviously, we have $\boldsymbol{\mathbf{x}}^{k}%
=\boldsymbol{\mathbf{x}}^{k,n}$ and $\boldsymbol{\mathbf{x}}^{k-1}%
=\boldsymbol{\mathbf{x}}^{k,0}$. In Proposition \ref{prop1}, we have shown that
the gradient of $f$ is block-coordinate-wise Lipschitz continuous with constants
$L_{i}=4(|\mathcal{A}_{i}|+|\mathcal{B}_{i}|)$ for all $i=1,2,\ldots,n.$ Thus,
by Lemma \ref{lem:descent}, we have for all $i=1,2,\ldots,n,$%
\begin{align}
f(\boldsymbol{\mathbf{x}}^{k,i})\leq &f(\boldsymbol{\mathbf{x}}^{k,i-1}%
)+\langle\nabla_{i}f(\boldsymbol{\mathbf{x}}^{k,i-1}),x_{i}^{k}-x_{i}%
^{k-1}\rangle\nonumber\\
&+(L_{i}/2)\| x_{i}^{k}-x_{i}^{k-1}\|^{2},
\label{eq30}%
\end{align}
where we used the equality%
\begin{equation}
\Vert \boldsymbol{\mathbf{x}}^{k,i}-\boldsymbol{\mathbf{x}}%
^{k,i-1}\Vert =\Vert x_{i}^{k}-x_{i}^{k-1}\Vert  \label{eql}%
\end{equation}
and
\begin{equation}
\langle \nabla f(\mathbf{x}^{k, i-1}), \mathbf{x}^{k, i} -
\mathbf{x}^{k, i-1} \rangle
=
\langle \nabla_i f(\mathbf{x}^{k, i-1}), x_i^k -
x_{i}^k \rangle .
\end{equation}
From the construction of the iterative sequence in Algorithm \ref{alg:alg} we
have, for all $i=1,2,\ldots,n$ and for all $k>0$,%
\begin{equation}
x_{i}^{k}-x_{i}^{k-1}=-(1/L_{i})\nabla_{i}f(\boldsymbol{\mathbf{x}}^{k,i-1}).
\label{eq3}%
\end{equation}
Plugging (\ref{eq3}) into (\ref{eq30}) yields%
\begin{equation}
f(\boldsymbol{\mathbf{x}}^{k,i-1})-f(\boldsymbol{\mathbf{x}}^{k,i}%
)\geq(1/2L_{i})\Vert\nabla_{i}f(\boldsymbol{\mathbf{x}}^{k,i-1})\Vert
^{2},\,\,\,i=1,\ldots,n. \label{eq4}%
\end{equation}
Summing over all inequalities and denoting ${\tau:=\max\{L_{i}\mid i=1,2,\ldots,n\}}$
we get%
\begin{equation}
f(\boldsymbol{\mathbf{x}}^{k-1})-f(\boldsymbol{\mathbf{x}}^{k})\geq
(1/2\tau)\sum_{i=1}^{n}\Vert\nabla_{i}f(\boldsymbol{\mathbf{x}}^{k,i-1}%
)\Vert^{2}, \label{eq41}%
\end{equation}
thus, proving (\ref{eq27}). From (\ref{eq41}) it follows that%
\begin{equation}
f(\boldsymbol{\mathbf{x}}^{0})\geq(1/2\tau)\sum_{k=0}^{\infty}\sum_{i=1}%
^{n}\Vert\nabla_{i}f(\boldsymbol{\mathbf{x}}^{k,i-1})\Vert^{2},
\end{equation}
which yields (\ref{eq5}), {since by interchanging the summation, an infinite sum of  positive values in the right-hand side is bounded from above by
  $f(\boldsymbol{\mathbf{x}}^{0})$}. Note that (\ref{eq3}) can be
equivalently rewritten in the form%
\begin{equation}
\left\Vert \boldsymbol{\mathbf{x}}^{k,i}-\boldsymbol{\mathbf{x}}%
^{k,i-1}\right\Vert =\left\Vert (1/L_{i})\nabla_{i}f(\boldsymbol{\mathbf{x}%
}^{k,i-1})\right\Vert \label{eq8}%
\end{equation}
which can be combined with (\ref{eq5}) to yield the desired result (\ref{eq7}).
\end{proof}

\begin{corollary}
\label{cor1} Any sequence $\{\boldsymbol{\mathbf{x}}^{k}\}_{k=0}^{\infty}$
generated by Algorithm \ref{alg:alg} is bounded.
\end{corollary}

\begin{proof}
\noindent From Proposition \ref{prop2}, it follows that the sequence
$\{\boldsymbol{\mathbf{x}}^{k}\}_{k=0}^{\infty}$ belongs to the bounded level
set ${S=\{\boldsymbol{\mathbf{x}}\mid f(\boldsymbol{\mathbf{x}})\leq
f(\boldsymbol{\mathbf{x}}^{0})\}}$, thus, by Proposition \ref{prop:prop3}, it
is bounded.
\end{proof}

Now we are ready to present and prove the main result.

\begin{theorem}
Let $\{\boldsymbol{\mathbf{x}}^{k}\}_{k=0}^{\infty}$ be any sequence generated
by Algorithm \ref{alg:alg}. Then every limit point of
$\{\boldsymbol{\mathbf{x}}^{k}\}_{k=0}^{\infty}$ is an optimal solution of the
problem (\ref{eq:minprob}) with the objective function (\ref{problem}) and the
sequence $\{f(\boldsymbol{\mathbf{x}}^{k})\}_{k=0}^{\infty}$ converges to the
optimal value $f^{\ast}$ of \eqref{eq:minprob}.
\end{theorem}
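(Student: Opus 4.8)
The plan is to combine the three conclusions of Proposition~\ref{prop2} with the boundedness established in Corollary~\ref{cor1} and the convexity from Proposition~\ref{prop1}. Since the sequence $\{\boldsymbol{\mathbf{x}}^{k}\}$ is bounded, it possesses at least one limit point; let $\boldsymbol{\mathbf{x}}^{\ast}$ be any such point and fix a subsequence $\{\boldsymbol{\mathbf{x}}^{k_{\ell}}\}$ with $\boldsymbol{\mathbf{x}}^{k_{\ell}}\to\boldsymbol{\mathbf{x}}^{\ast}$. The central goal is to show $\nabla f(\boldsymbol{\mathbf{x}}^{\ast})=0$; because $f$ is convex (Proposition~\ref{prop1}), this stationarity immediately forces $\boldsymbol{\mathbf{x}}^{\ast}$ to be a global minimizer, so that $f(\boldsymbol{\mathbf{x}}^{\ast})=f^{\ast}$. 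Since $\boldsymbol{\mathbf{x}}^{\ast}$ is an arbitrary limit point, this settles the first assertion for all limit points simultaneously.

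To reach $\nabla f(\boldsymbol{\mathbf{x}}^{\ast})=0$, I would first transfer the convergence of the subsequence to all the intermediate iterates $\boldsymbol{\mathbf{x}}^{k,i}$. From \eqref{eq7} and a telescoping triangle-inequality argument, $\|\boldsymbol{\mathbf{x}}^{k,i}-\boldsymbol{\mathbf{x}}^{k}\|\to 0$ for every $i=0,1,\ldots,n$, since $\boldsymbol{\mathbf{x}}^{k}=\boldsymbol{\mathbf{x}}^{k,n}$ and consecutive intermediate differences vanish. Consequently $\boldsymbol{\mathbf{x}}^{k_{\ell},\,i-1}\to\boldsymbol{\mathbf{x}}^{\ast}$ for each $i$. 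Invoking the continuity of each block-gradient $\nabla_{i}f$ (established in Proposition~\ref{prop1}) together with \eqref{eq5}, which asserts $\nabla_{i}f(\boldsymbol{\mathbf{x}}^{k,i-1})\to 0$, I obtain $\nabla_{i}f(\boldsymbol{\mathbf{x}}^{\ast})=\lim_{\ell}\nabla_{i}f(\boldsymbol{\mathbf{x}}^{k_{\ell},\,i-1})=0$ for every block $i$, whence $\nabla f(\boldsymbol{\mathbf{x}}^{\ast})=0$ and $\boldsymbol{\mathbf{x}}^{\ast}$ solves \eqref{eq:minprob}.

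For the convergence of the function values, I would use that $\{f(\boldsymbol{\mathbf{x}}^{k})\}$ is nonincreasing by \eqref{eq27} and bounded below (by $f^{\ast}\geq 0$, as $f$ is a sum of squares), hence convergent to some limit $\bar{f}$. Continuity of $f$ along the subsequence gives $f(\boldsymbol{\mathbf{x}}^{k_{\ell}})\to f(\boldsymbol{\mathbf{x}}^{\ast})=f^{\ast}$; since a convergent sequence and each of its subsequences share the same limit, $\bar{f}=f^{\ast}$, proving $f(\boldsymbol{\mathbf{x}}^{k})\to f^{\ast}$.

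I expect the main obstacle to be the bookkeeping in the first transfer step: the gradient estimate \eqref{eq5} is stated at the \emph{intermediate} iterates $\boldsymbol{\mathbf{x}}^{k,i-1}$ rather than at $\boldsymbol{\mathbf{x}}^{k}$ itself, so care is needed to verify that these intermediate points also converge to $\boldsymbol{\mathbf{x}}^{\ast}$ before continuity of $\nabla_{i}f$ may be applied. This is exactly where \eqref{eq7} is indispensable, and it is the only place where the particular sequential ordering of the block updates in Algorithm~\ref{alg:alg} enters the argument.
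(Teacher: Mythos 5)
Your proposal is correct and follows essentially the same route as the paper's proof: both arguments show that the intermediate iterates $\boldsymbol{\mathbf{x}}^{k_\ell,i-1}$ converge to the same limit $\boldsymbol{\mathbf{x}}^{\ast}$ (the paper via a combined auxiliary sequence, you via a telescoping application of \eqref{eq7} --- these are the same idea), then combine \eqref{eq5} with continuity of the gradient to get $\nabla f(\boldsymbol{\mathbf{x}}^{\ast})=0$, invoke convexity for global optimality, and use monotonicity plus the subsequence limit to conclude $f(\boldsymbol{\mathbf{x}}^{k})\to f^{\ast}$. No gaps.
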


\begin{proof}
\noindent From Corollary \ref{cor1}, it follows that there exist converging
subsequence of the sequences $\{\boldsymbol{\mathbf{x}}^{k}\}_{k=0}^{\infty}$.
Let $\{\boldsymbol{\mathbf{x}}^{k_{j}}\}_{j=0}^{\infty}$ be such a convergent
subsequence and denote by $\boldsymbol{\mathbf{x}}^{\ast}$ its limit point. We
show that $\nabla f(\boldsymbol{\mathbf{x}}^{\ast})=0$, thus proving the
optimality of the point $\boldsymbol{\mathbf{x}}^{\ast}$.

Consider the sequence $\{\boldsymbol{\mathbf{x}}^{k_{j}}\}_{j=0}^{\infty}%
\cup\{\{\boldsymbol{\mathbf{x}}^{k_{j},i}\}_{i=1}^{n}\}_{j=0}^{\infty}$
obtained by combining these two sequences and denote it by
$\{\boldsymbol{\mathbf{y}}^{\ell}\}_{\ell=0}^{\infty}.$ From (\ref{eq7}) in
Proposition \ref{prop2}, it follows that the sequence $\{\boldsymbol{\mathbf{y}%
}^{\ell}\}_{\ell=0}^{\infty}$ also converges to the point
$\boldsymbol{\mathbf{x}}^{\ast}$. The sequence of gradients $\{\nabla
f(\boldsymbol{\mathbf{y}}^{\ell})\}_{\ell=0}^{\infty}$ then converges to
$\nabla f(\boldsymbol{\mathbf{x}}^{\ast})$ and so does every subsequence of it.

For the subsequence $\{\nabla f(\boldsymbol{\mathbf{x}}^{k_{j},1}%
)\}_{j=0}^{\infty}$, we conclude, from (\ref{eq5}) of Proposition \ref{prop2},
that it converges to a vector that has zeros in all its components that refer to
the part that comes from the partial derivatives with respect to components of
$x_{1}$. Thus, $\nabla f(\boldsymbol{\mathbf{x}}^{\ast})$ contains zeros in the
same components. Repeating this argument we reach the conclusion that all
elements of $\nabla f(\boldsymbol{\mathbf{x}}^{\ast})$ are zeros.\medskip

From (\ref{eq27}) it follows that the sequence $\{f(\boldsymbol{\mathbf{x}}%
^{k})\}_{k=0}^{\infty}$ is nonincreasing and, since it is also bounded from
below by $f^{\ast}$, it converges. Since we showed that any converging
subsequence $\{\boldsymbol{\mathbf{x}}^{k_{j}}\}_{j=0}^{\infty}$ converges to
the optimal solution, the subsequence $\{f(\boldsymbol{\mathbf{x}}^{k_{j}%
})\}_{j=0}^{\infty}$ converges to the optimal value $f^{\ast}$, thus the entire
sequence $\{f(\boldsymbol{\mathbf{x}}^{k})\}_{k=0}^{\infty}$ converges to
$f^{\ast}$.
\end{proof}

{
\begin{remark}
  It is noted that Algorithm~\ref{alg:alg} converges even if the
  intersection in \eqref{eq:problem} does not 
  contain the target nodes or even if it is empty--which could happen for negative
  measurement errors.  In fact, the proposed algorithm solves the
  optimization problem in \eqref{eq:minprob}, regardless of the extent
  of the intersection in~\eqref{alg:alg}.
\end{remark}}

\section{Numerical results}
\label{sec:sim_resul}
In this section, we evaluate the performance of Algorithm \ref{alg:alg} and two methods
proposed in \cite{Gholami_Eurasip_2011} and in \cite{Jia_Buehrer_2011} through computer simulations. To evaluate
the different methods, we consider a~2-dimensional network consisting of a number of
reference and target nodes. We will evaluate different algorithms ability to
localize target nodes under line-of-sight (LOS) and non-line-of-sight (NLOS) conditions.

The two algorithms of \cite{Gholami_Eurasip_2011} and
\cite{Jia_Buehrer_2011} as well as Algorithm \ref{alg:alg} cycle
through the networks in an ordered fashion. Although it is possible to
update the location of target nodes in an arbitrarily ordered fashion,
which may be more suitable for a practical implementation, algorithms
exchange the updated position estimates between nodes in the order
$1,2,\ldots,n$, where $n$ is the number of target nodes. In
\cite{Jia_Buehrer_2011} the authors used a method based on projections
(in parallel) onto spheres (projections onto the boundary of each
individual set). This method is sensitive to the choice of the initial
points and it may converge to local minima resulting in large
errors. To avoid converging to local minima the algorithm of
\cite{Jia_Buehrer_2011} is initialized at a target node with the
position of the closest reference node connected to that target.
However, in general, finding good initial estimates for all target
nodes is a challenging task in positioning problems. In
\cite{Gholami_Eurasip_2011}, a method based on projections onto convex
sets (POCS) was considered to localize target nodes.
{ Note that contrary to the algorithm proposed in this
  paper, there are no formal convergence proofs for the algorithms
  introduced in \cite{Gholami_Eurasip_2011,Jia_Buehrer_2011}.}

\subsection{Simulation setup}
The methods proposed in \cite{Gholami_Eurasip_2011,Jia_Buehrer_2011},
and Algorithm \ref{alg:alg}, are henceforth called cooperative
parallel projection onto boundary (Coop.~PPB), cooperative projection
onto convex sets (Coop.~ POCS), and cooperative parallel projection
method (Coop.~PPM), respectively. We have conducted computer
simulations for different scenarios and evaluated the algorithms based
on two metrics: convergence and accuracy. We have considered a
$100\times100~\mathrm{m}^{2}$ square area shown in
Fig.\,\ref{fig:network} with a number of reference nodes at fixed
positions. In the simulation, we randomly distributed a number of
target nodes inside the area\footnote{We have also assessed the
  algorithms for 3-dimensional networks. The results show similar
  behavior as for 2-dimensional networks.}.

\begin{figure}
\centering
\psfrag{xlabel}[cc][][.8]{$x$-axis [m]}
\psfrag{ylabel}[cc][][.8]{$y$-axis [m]}
\psfrag{a1}[cc][][.8]{$a_1$}
\psfrag{a2}[cc][][.8]{$a_2$}
\psfrag{a3}[cc][][.8]{$a_3$}
\psfrag{a4}[cc][][.8]{$a_4$}
\psfrag{a5}[cc][][.8]{$a_5$}
\includegraphics[width=80mm]{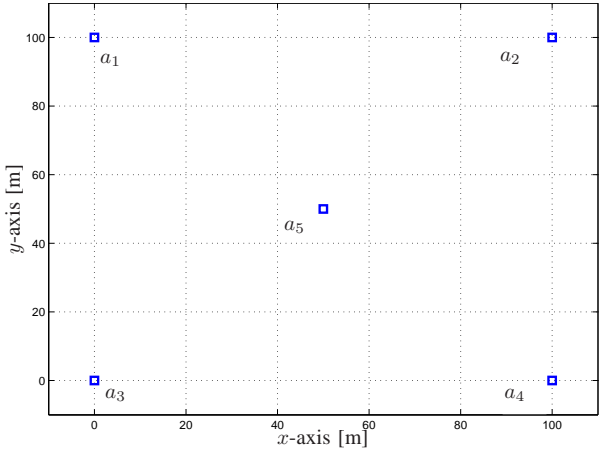}
\caption{Layout of the network considered in simulations, where
  squares denote the positions of reference nodes and target nodes are
  randomly distributed inside the convex hull of the reference nodes.}
\label{fig:network}
\end{figure}

The measurement errors in \eqref{eq:dis_estimate} are modeled as
\[
\varepsilon_{mn} =
\begin{cases}
  \varepsilon_{G,mn}, & \text{LOS conditions},\\
  \varepsilon_{G,mn} + b_{mn} \varepsilon_{U,mn}, & \text{NLOS conditions},
\end{cases}
\]
where $\varepsilon_{G,mn}$ are iid zero-mean Gaussian random variables
with variance $\sigma^2$, $b_{mn}\in\{0, 1\}$ are iid Bernoulli random
variables with parameter $p_{\text{NLOS}} = \Pr\{b_{mn} = 1\}$, and
$\varepsilon_{U,mn}\in [0, L]$ are iid uniformly distributed random
variables.  The validity of uniform distribution to model NLOS noise
has been considered in various works, e.g., \cite{Mats_thesis_2008,Urruela_2006,Gholami_Eurasip_2011}. In our
simulations, we used $\sigma = 1$~m, $p_{\text{NLOS}} = 0.2$, and $L = 20$~m.
The connectivity is defined based on the actual distances between
sensor nodes. Namely, if the distance between two nodes is $40$~m or
less, we assume that the nodes are connected.  To assess the
positioning algorithms, we consider the cumulative distribution
function (CDF) of the position error for all
target nodes. That is, we define the position error for target node
$i$ as
\begin{align}
E_{i,m}=\|x^{K}_{i,m}-x_{i,m}\|,\quad i=1,2,\ldots,n, \quad m = 1, 2,
\ldots, N
\end{align}
where $x^{K}_{i,m}$ is an estimate of target node $i$ position,
$x_{i,m}$, after $K$ iterations for the $m$th network realization, and $N$ is the
total number of network realizations. In the results below, we have used $K = 300$, which is large enough to ensure convergence
and then compute the empirical CDF of the position error for all target nodes collected
in $\mathbf{E}$ as
\begin{align}
\text{CDF}(\alpha)&=\mathrm{Pr}(\text{position error}\leq\alpha)\nonumber\\
&=\frac{\sum_{i=1}^n\sum_{m=1}^{N}I(E_{i,m}-\alpha)}{nN},
\end{align}
where the indicator function $I(t)$ is defined as
\begin{align}
I(t)=\left
\{\begin{array}[c]{ll}%
1,& \text{if}~t\geq0,\\
0,&\text{otherwise.}
\end{array}\right.
\end{align}

To study the convergence rate, we consider the average residual defined as~
\begin{align}
\bar{r}_k=1/(nN)\sum_{m=1}^N\|\mathbf{x}^{k}_m-\mathbf{x}^{k-1}_m\|,
\end{align}
where
\begin{align}
\mathbf{x}^{k}_m=\left({x}_{1,m}^{k},{x}_{2,m}^{k}\ldots,{x}_{n,m}^{k}\right).
\end{align}


To initialize the algorithm of \cite{Jia_Buehrer_2011}, we set the initial
estimate for a target node as the position of the closest reference node
connected to the target. For a target $i$ which is not connected to any
reference node, we pick the position of the closest target (already
initialized) connected to the target node~$i$.

\subsection{Performance of algorithms}

In this section, we assess the above mentioned algorithms for LOS and
NLOS scenarios.  In Fig.~\ref{fig:CDF_4ref_NLOS} and
Fig.~\ref{fig:CDF_5ref_NLOS}, we plot the CDF of the position error of
the algorithms for various numbers of reference and target nodes in
the LOS scenario. The figures show a superior performance for the
proposed approach Coop.~PPM compared with the others, especially with
respect to Coop.~PPB.  Coop.~PPB has poor performance for both small
number of reference nodes and large number of target nodes.  The
reason is that finding good initial points is difficult when the
number of reference nodes decreases or the number target nodes
increases. From the plots, we observe that Coop.~POCS shows good
performance compared with Coop.~PPB.

\begin{figure*}
\subfigure[]{
\psfrag{xlabel}[cc][][.8]{Position error [m]}
\psfrag{ylabel}[cc][][.8]{CDF}
\psfrag{Coop. Proj. onto Boundary}[cc][][.5]{Coop. PPB\qquad\qquad\quad}
\psfrag{Coop. POCS}[cc][][.5]{Coop.POCS}
\psfrag{Coop. PPM}[cc][][.5]{Coop. PPM}
\label{fig:cdf_4ref_20tar}
\includegraphics[width=80mm]{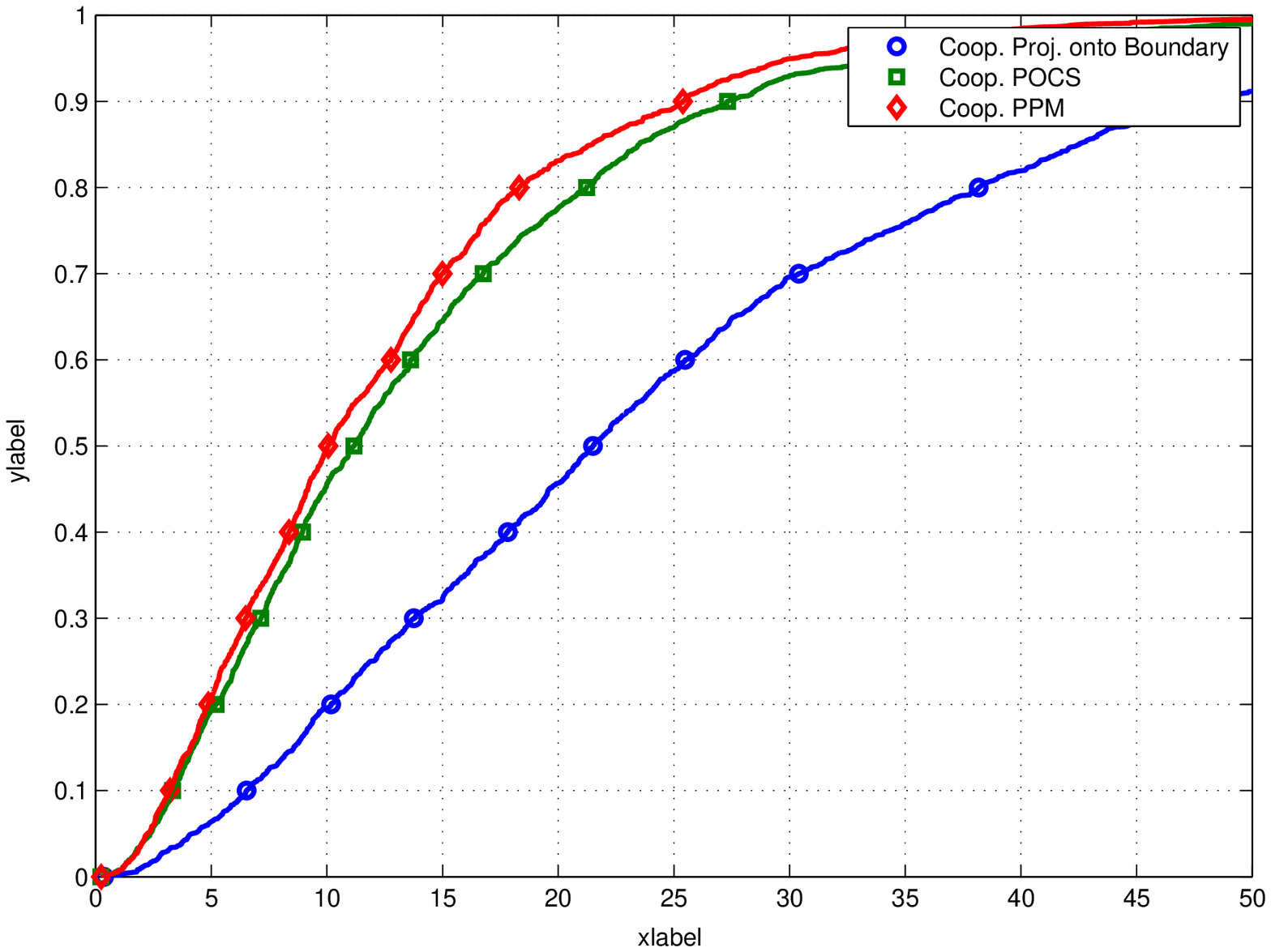}} \subfigure[]{
\psfrag{xlabel}[cc][][.8]{Position error [m]}
\psfrag{ylabel}[cc][][.8]{CDF}
\psfrag{Coop. Proj. onto Boundary}[cc][][.5]{Coop. PPB\qquad\qquad\quad}
\psfrag{Coop. POCS}[cc][][.5]{Coop.POCS}
\psfrag{Coop. PPM}[cc][][.5]{Coop. PPM}
\label{fig:cdf_4ref_30tar}
\includegraphics[width=80mm]{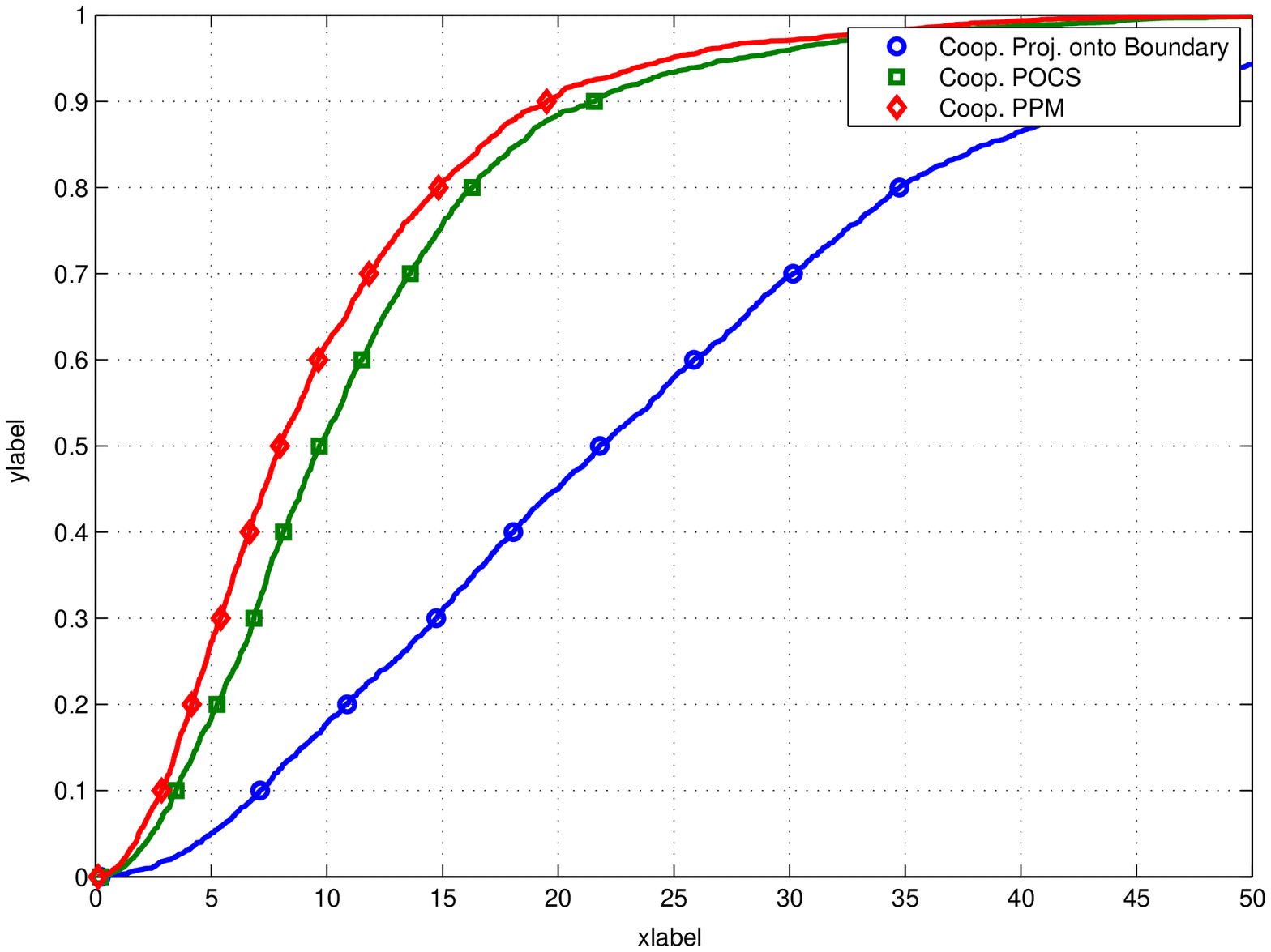}}\newline%
\subfigure[]{
\psfrag{xlabel}[cc][][.8]{Position error [m]}
\psfrag{ylabel}[cc][][.8]{CDF}
\psfrag{Coop. Proj. onto Boundary}[cc][][.5]{Coop. PPB\qquad\qquad\quad}
\psfrag{Coop. POCS}[cc][][.5]{Coop.POCS}
\psfrag{Coop. PPM}[cc][][.5]{Coop. PPM}
\label{fig:cdf_4ref_60tar}
\includegraphics[width=80mm]{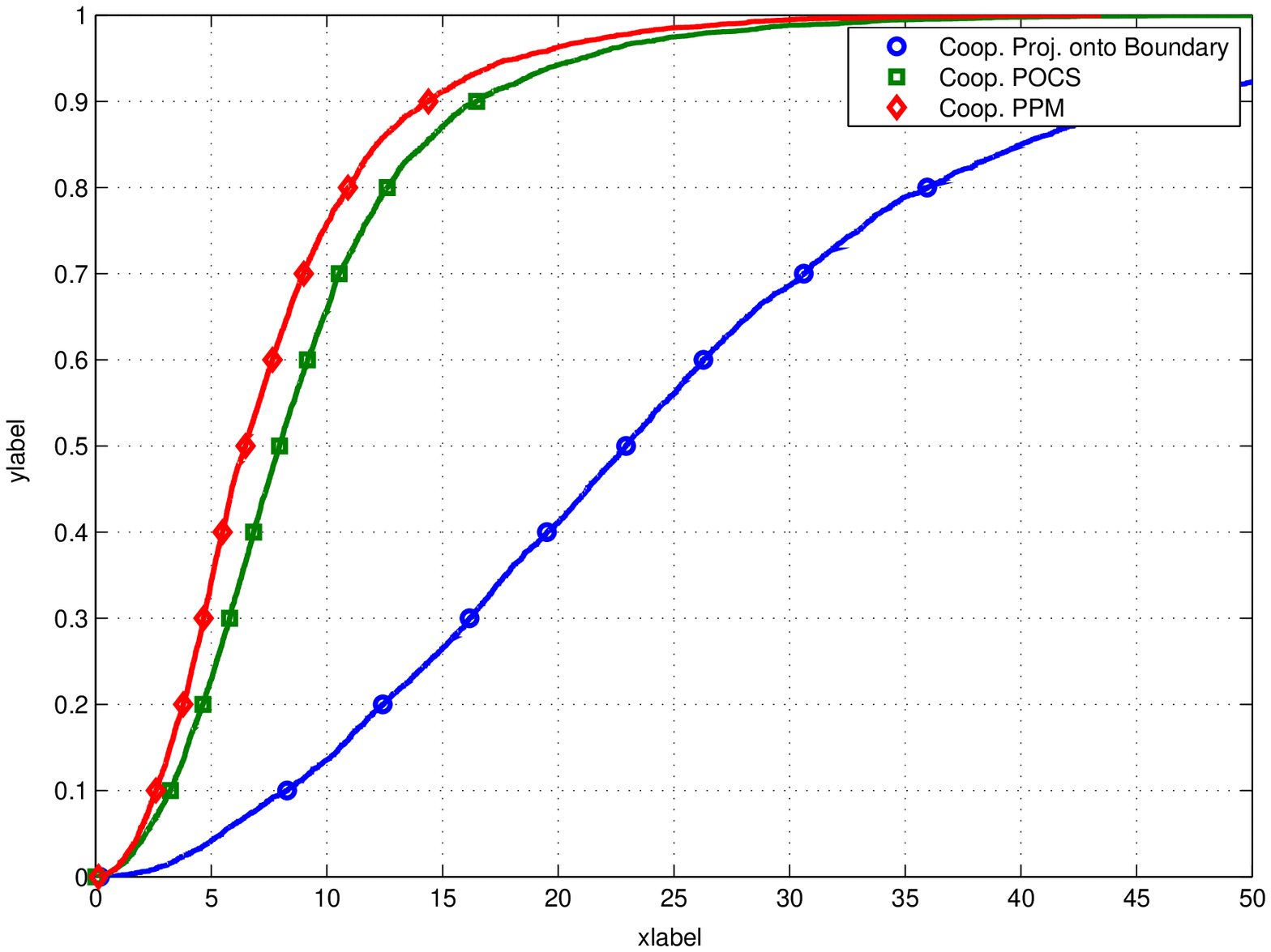}} \subfigure[]{
\psfrag{xlabel}[cc][][.8]{Position error [m]}
\psfrag{ylabel}[cc][][.8]{CDF}
\psfrag{Coop. Proj. onto Boundary}[cc][][.5]{Coop. PPB\qquad\qquad\quad}
\psfrag{Coop. POCS}[cc][][.5]{Coop.POCS}
\psfrag{Coop. PPM}[cc][][.5]{Coop. PPM}
\label{fig:cdf_4ref_100tar}
\includegraphics[width=80mm]{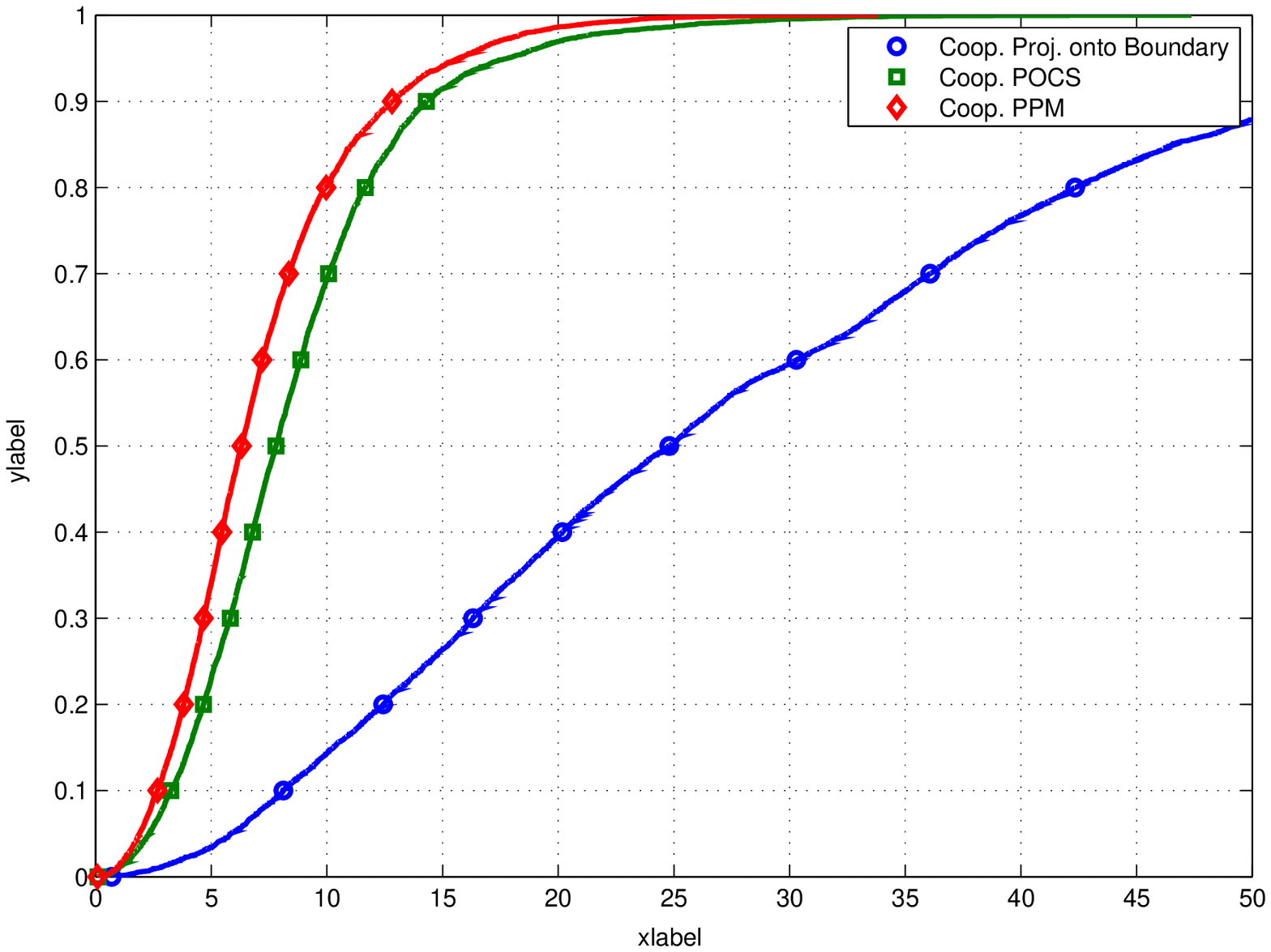}}
\caption{The CDF
of the position error of different algorithms in LOS scenarios for four reference nodes $a_1, a_2, a_3, a_4$ and
\subref{fig:cdf_4ref_20tar} 20 target nodes,
\subref{fig:cdf_4ref_30tar} 30 target nodes,
\subref{fig:cdf_4ref_60tar} 60 target nodes, and
\subref{fig:cdf_4ref_100tar} 100 target nodes.}%
\label{fig:CDF_4ref_NLOS}%
\end{figure*}

\begin{figure*}
\subfigure[]{
\psfrag{xlabel}[cc][][.8]{Position error [m]}
\psfrag{ylabel}[cc][][.8]{CDF}
\psfrag{Coop. Proj. onto Boundary}[cc][][.5]{Coop. PPB\qquad\qquad\quad}
\psfrag{Coop. POCS}[cc][][.5]{Coop.POCS}
\psfrag{Coop. PPM}[cc][][.5]{Coop. PPM}
\label{fig:cdf_5ref_20tar}
\includegraphics[width=80mm]{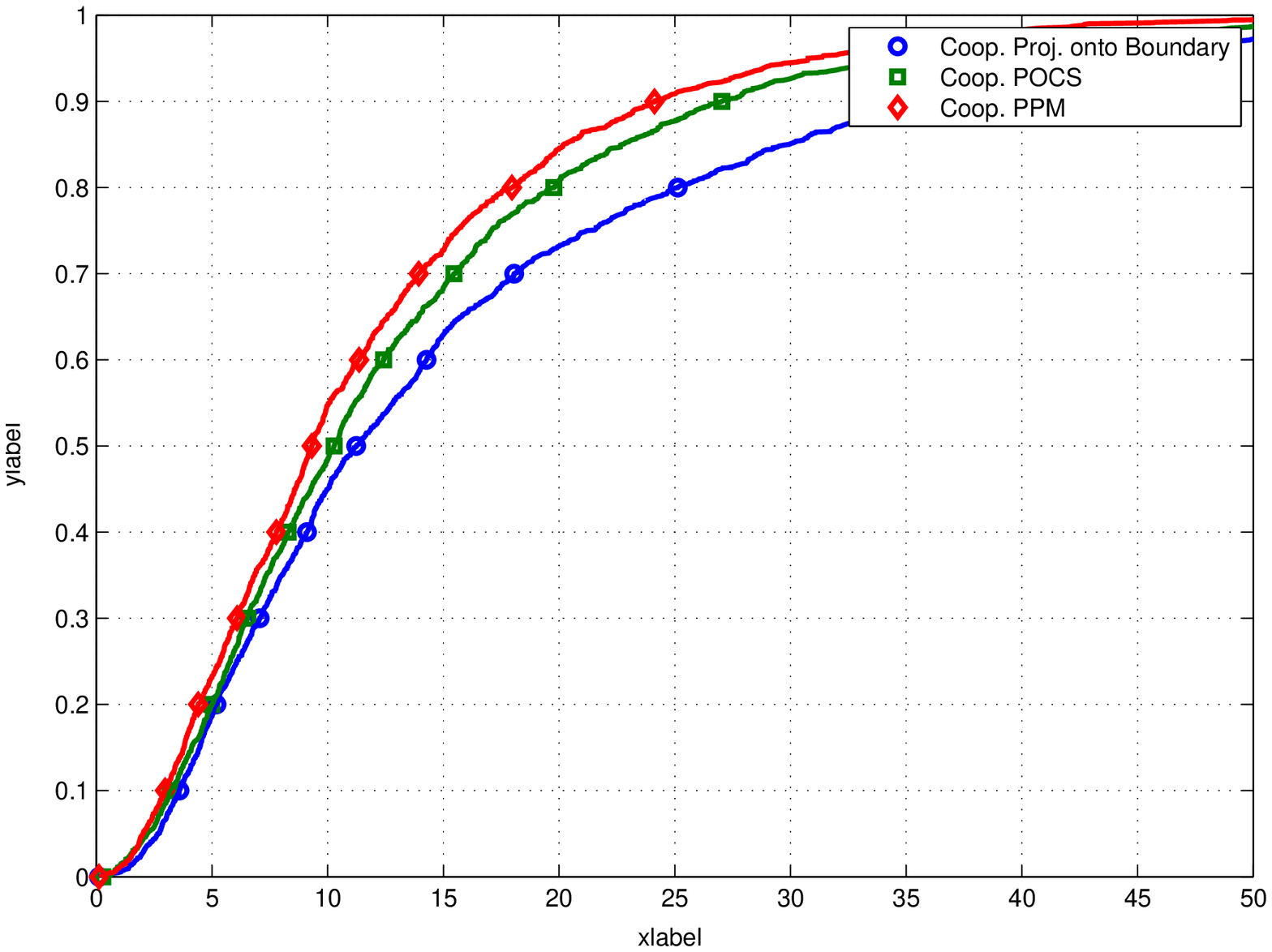}} \subfigure[]{
\psfrag{xlabel}[cc][][.8]{Position error [m]}
\psfrag{ylabel}[cc][][.8]{CDF}
\psfrag{Coop. Proj. onto Boundary}[cc][][.5]{Coop. PPB\qquad\qquad\quad}
\psfrag{Coop. POCS}[cc][][.5]{Coop.POCS}
\psfrag{Coop. PPM}[cc][][.5]{Coop. PPM}
\label{fig:cdf_5ref_30tar}
\includegraphics[width=80mm]{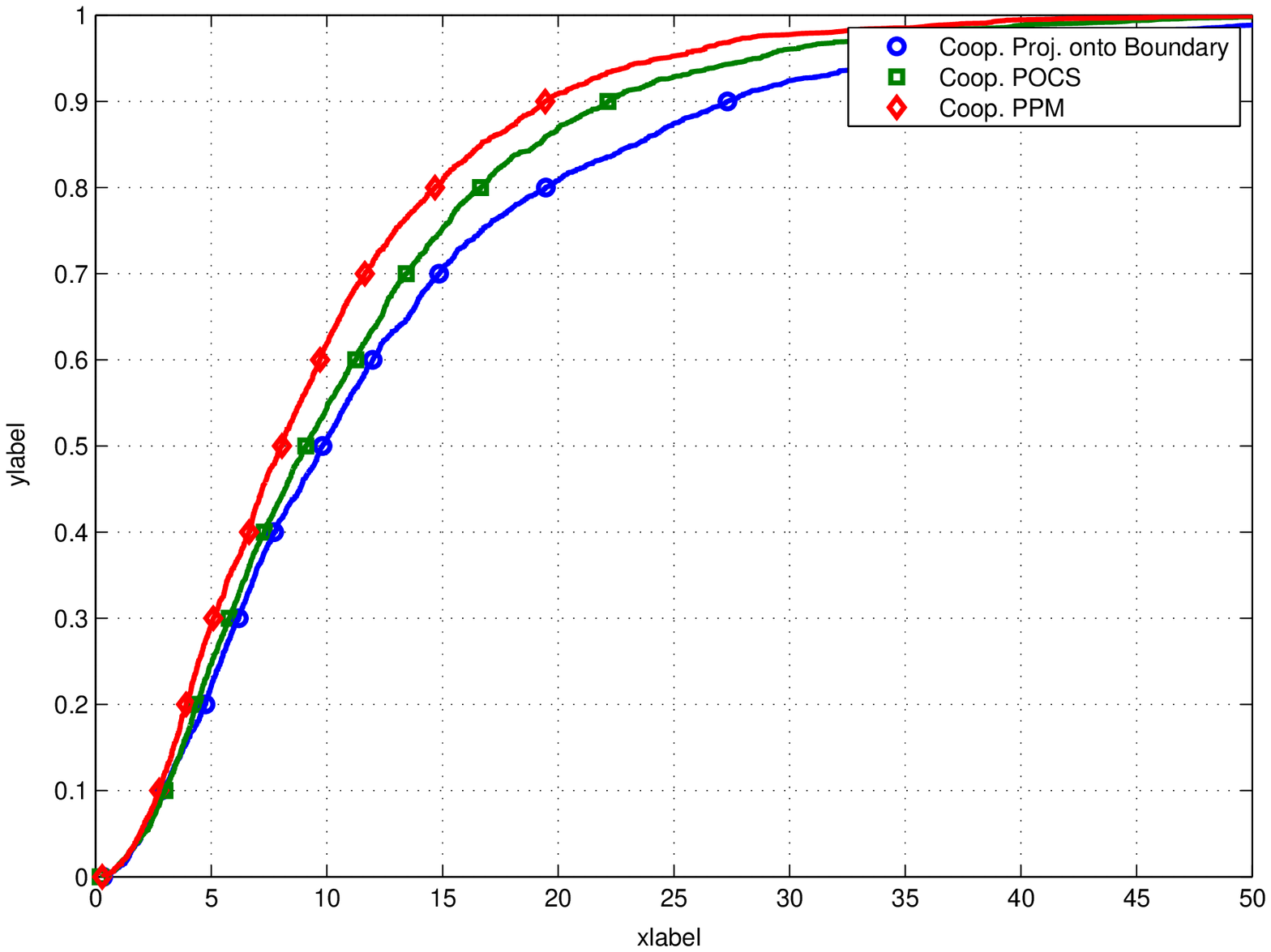}}\newline%
\subfigure[]{
\psfrag{xlabel}[cc][][.8]{Position error [m]}
\psfrag{ylabel}[cc][][.8]{CDF}
\psfrag{Coop. Proj. onto Boundary}[cc][][.5]{Coop. PPB\qquad\qquad\quad}
\psfrag{Coop. POCS}[cc][][.5]{Coop.POCS}
\psfrag{Coop. PPM}[cc][][.5]{Coop. PPM}
\label{fig:cdf_5ref_60tar}
\includegraphics[width=80mm]{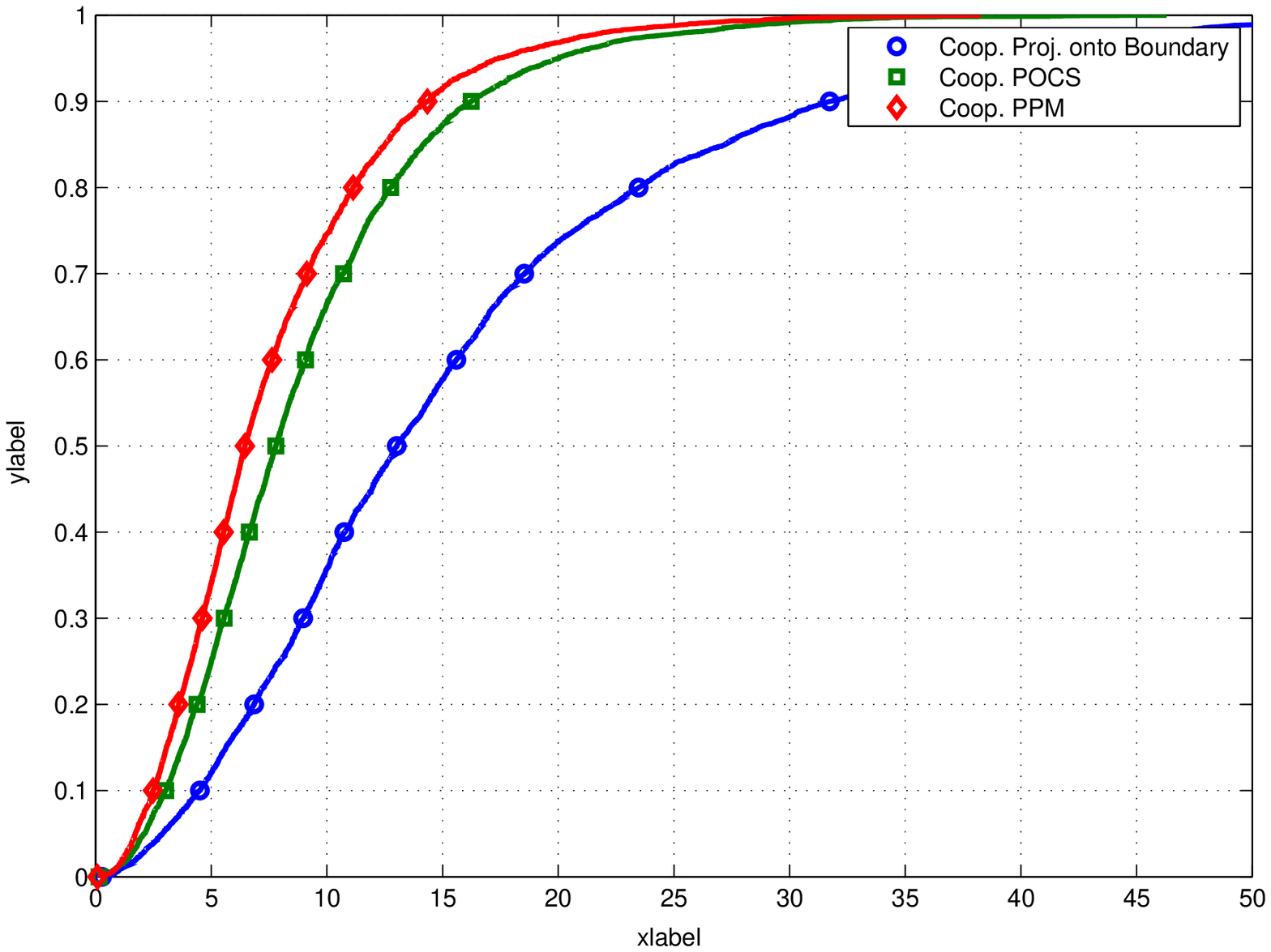}} \subfigure[]{
\psfrag{xlabel}[cc][][.8]{Position error [m]}
\psfrag{ylabel}[cc][][.8]{CDF}
\psfrag{Coop. Proj. onto Boundary}[cc][][.5]{Coop. PPB\qquad\qquad\quad}
\psfrag{Coop. POCS}[cc][][.5]{Coop.POCS}
\psfrag{Coop. PPM}[cc][][.5]{Coop. PPM}
\label{fig:cdf_5ref_100tar}
\includegraphics[width=80mm]{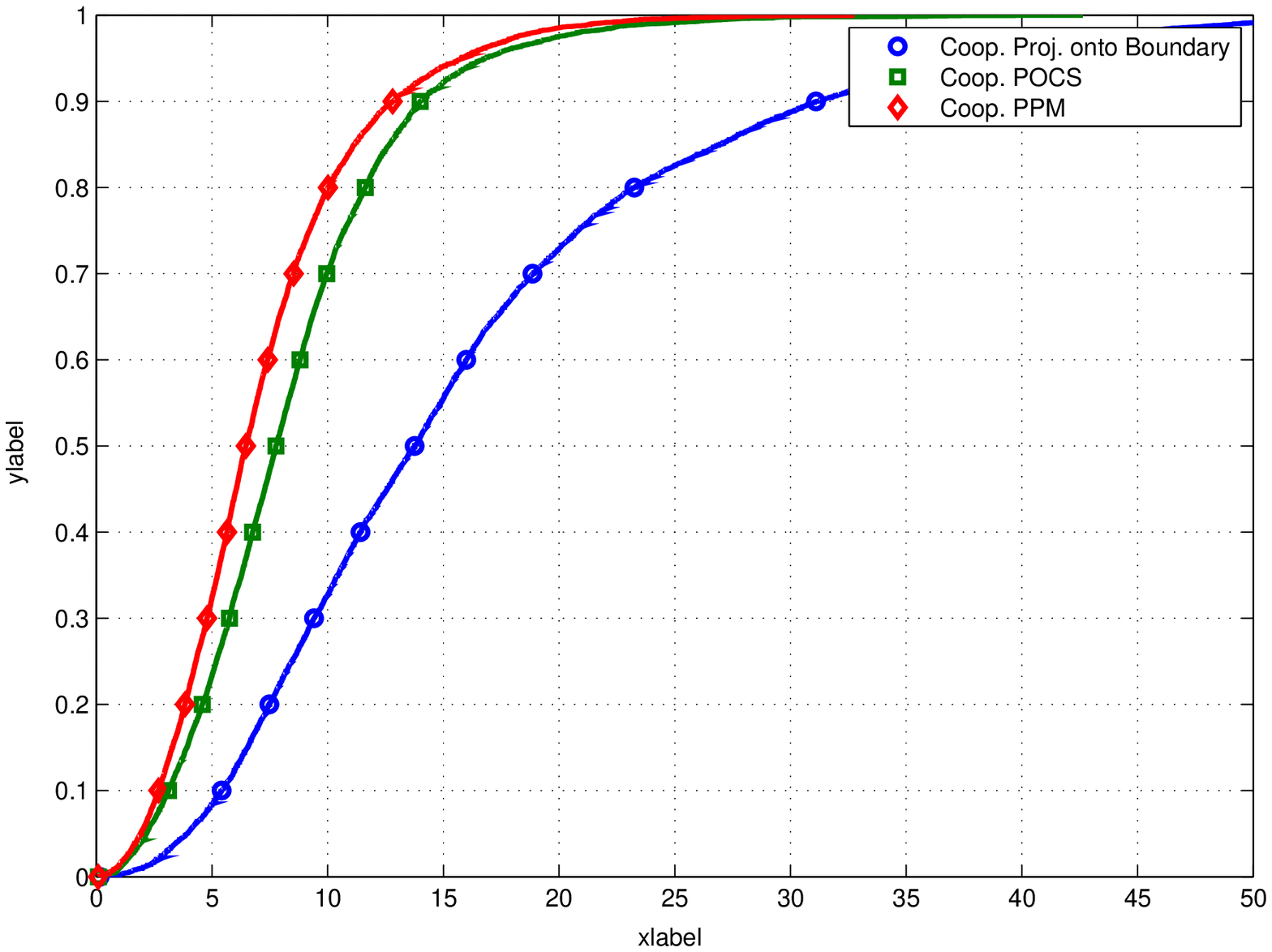}}
\caption{The CDF
of the position error of different algorithms in LOS scenario for five reference nodes $a_1, a_2, a_3, a_4, a_5$ and
\subref{fig:cdf_5ref_20tar} 20 target nodes,
\subref{fig:cdf_5ref_30tar} 30 target nodes,
\subref{fig:cdf_5ref_60tar} 60 target nodes, and
\subref{fig:cdf_5ref_100tar} 100 target nodes.}%
\label{fig:CDF_5ref_NLOS}%
\end{figure*}

To further investigate the algorithms, we assume 20\% of distance
measurements are NLOS.  The position error CDFs are plotted in
Fig.~\ref{fig:NLOS} for different numbers of target nodes in NLOS
conditions.  In this simulation, we used four reference nodes $a_1,
\ldots, a_4$.  From the plots, we observe that Coop.~POCS and
Coop.~PPM show better performance compared with Coop.~PPB and that
they are robust against NLOS measurements. The poor performance of
Coop.~PPB in the considered scenario, is mainly due to poor
initialization.

\begin{figure*}[ptb]
\subfigure[]{
\psfrag{xlabel}[cc][][.8]{Position error [m]}
\psfrag{ylabel}[cc][][.8]{CDF}
\psfrag{Coop. Proj. onto Boundary}[cc][][.5]{Coop.~PPB\qquad\qquad\quad}
\psfrag{Coop. POCS}[cc][][.5]{Coop.~POCS}
\psfrag{Coop. PPM}[cc][][.5]{Coop. PPM}
\label{fig:cdf_4ref_30tar_nlos}
\includegraphics[width=80mm]{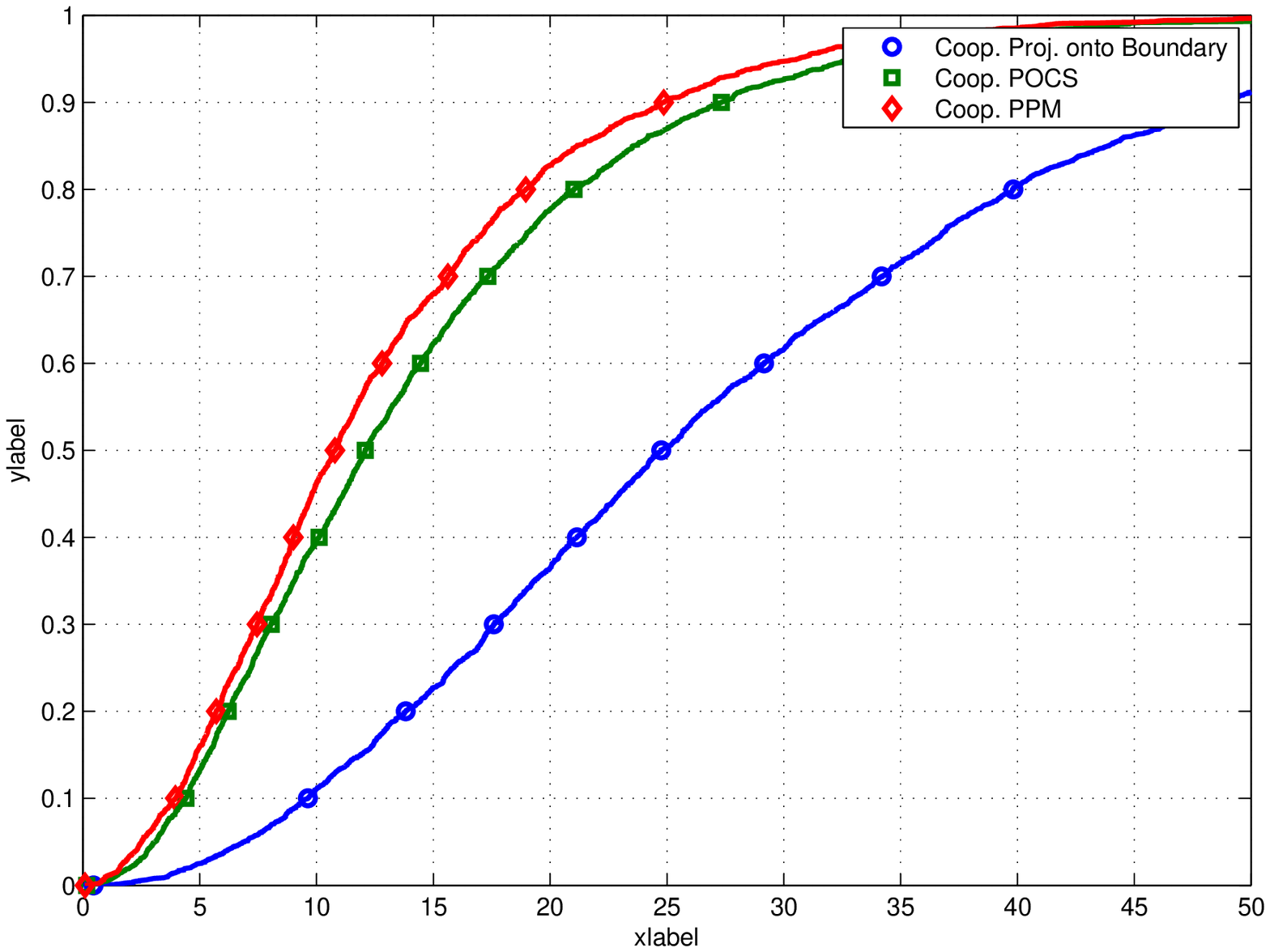}} \subfigure[]{
\psfrag{xlabel}[cc][][.8]{Position error [m]}
\psfrag{ylabel}[cc][][.8]{CDF}
\psfrag{Coop. Proj. onto Boundary}[cc][][.5]{Coop.~PPB\qquad\qquad\quad}
\psfrag{Coop. POCS}[cc][][.5]{Coop.~POCS}
\psfrag{Coop. PPM}[cc][][.5]{Coop. PPM}
\label{fig:cdf_4ref_60tar_nlos}
\includegraphics[width=80mm]{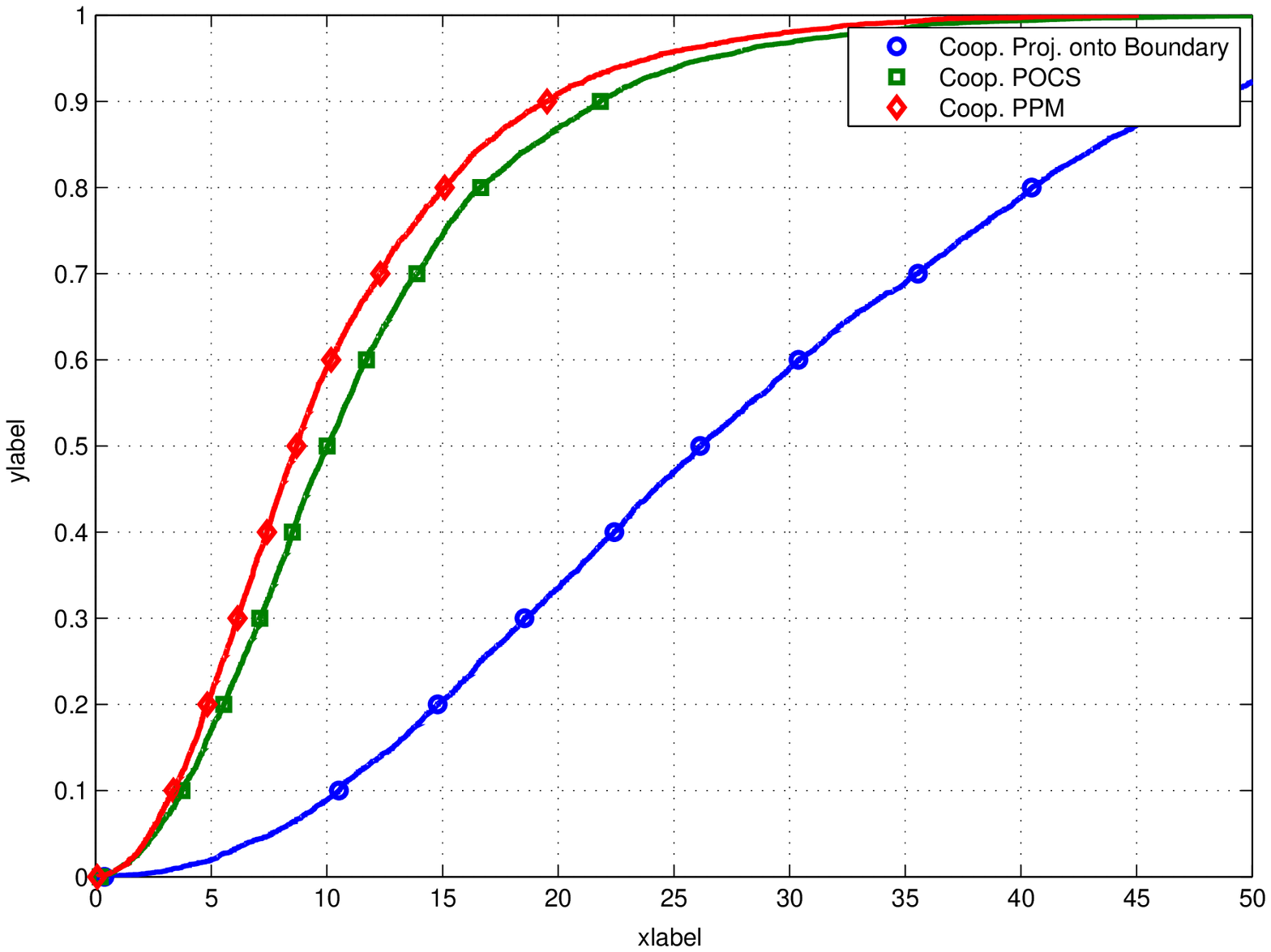}} \caption{The
CDF of the position error in NLOS scenario for four reference nodes and \subref{fig:cdf_4ref_30tar_nlos} 30 target nodes and \subref{fig:cdf_4ref_60tar_nlos} 60 target nodes.}%
\label{fig:NLOS}%
\end{figure*}

Based on the results presented here and on additional simulation
results that we performed, we can conclude that the method proposed in
this paper has superior performance compared with other projection
approaches available in the positioning literature for sparse networks
in which target nodes are connected to a few other nodes.

\subsection{Convergence speed}
\begin{figure*}
\subfigure[]{
\psfrag{xlabel}[cc][][.8]{iterations, $k$}
\psfrag{ylabel}[cc][][1]{$\bar{r}_k$}
\psfrag{Coop.PPB}[cc][][.45]{Coop.~PPB}
\psfrag{Coop.POCS}[cc][][.45]{Coop.~POCS}
\psfrag{Coop.PPM}[cc][][.45]{Coop.~PPM}
\label{fig:5ref_20tar_LOS}
\includegraphics[width=80mm]{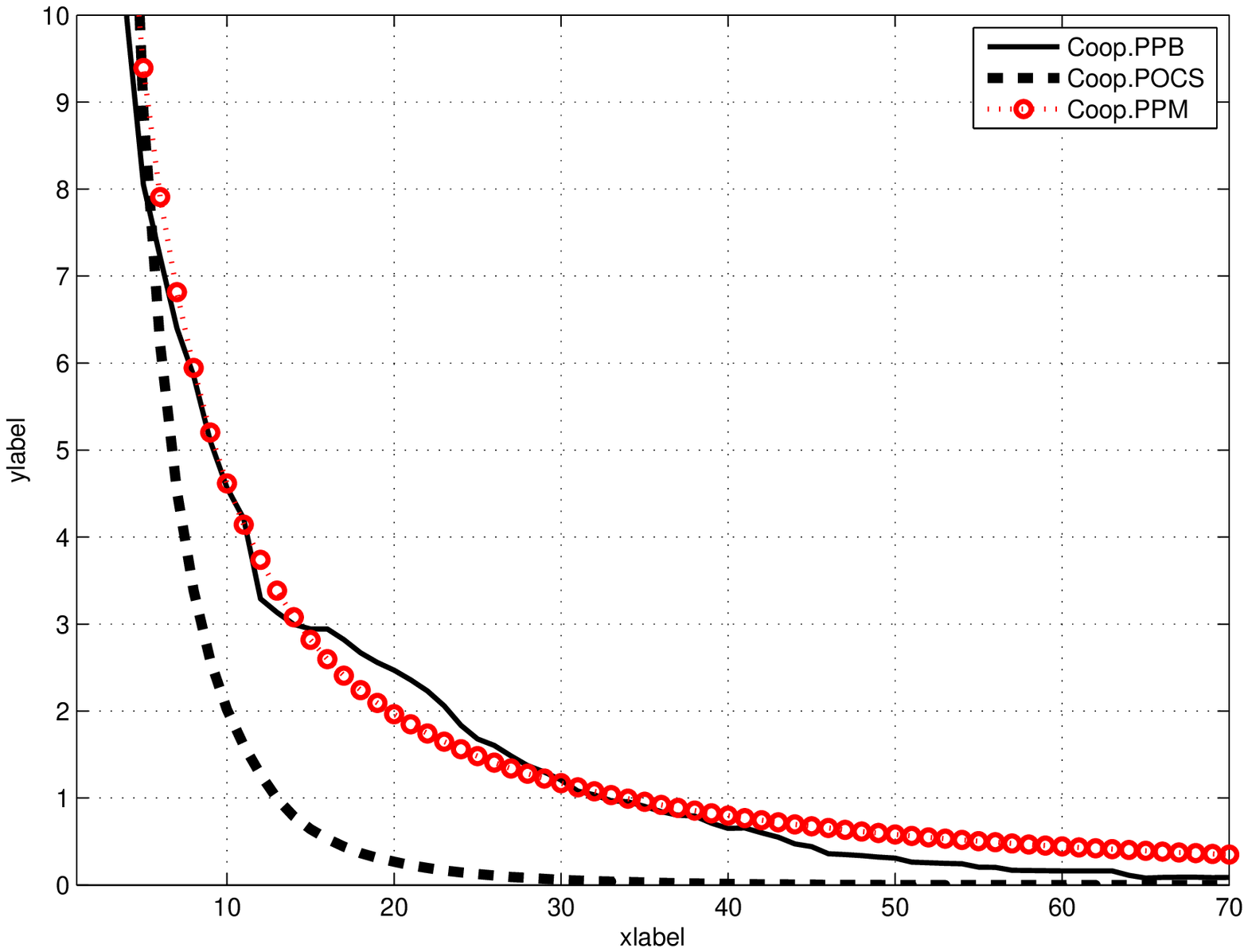}}
\subfigure[]{
\psfrag{xlabel}[cc][][.8]{iterations, $k$}
\psfrag{ylabel}[cc][][1]{$\bar{r}_k$}
\psfrag{Coop.PPB}[cc][][.45]{Coop.~PPB}
\psfrag{Coop.POCS}[cc][][.45]{Coop.~POCS}
\psfrag{Coop.PPM}[cc][][.45]{Coop.~PPM}
\label{fig:5ref_30tar_LOS}
\includegraphics[width=80mm]{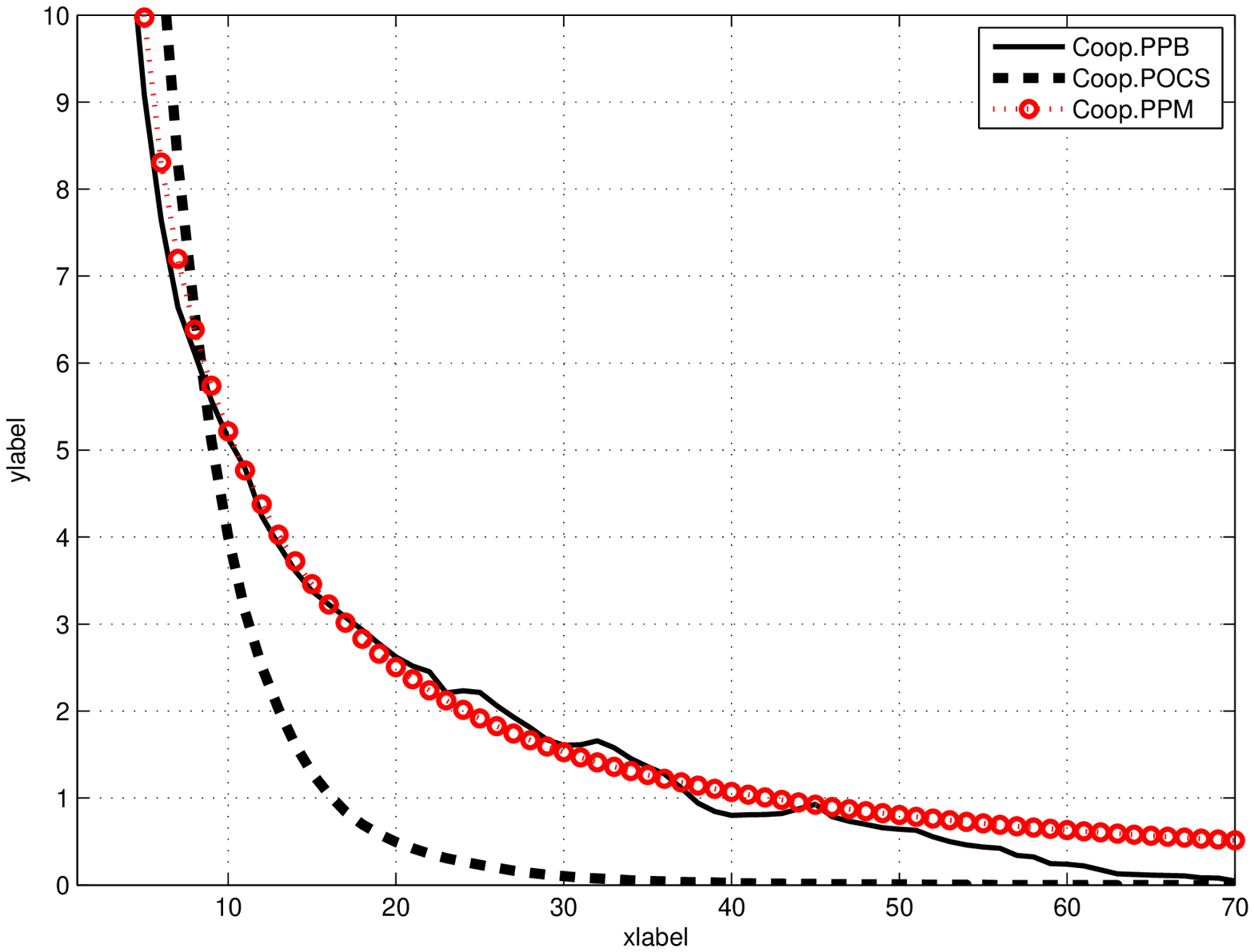}}\\
\subfigure[]{
\psfrag{xlabel}[cc][][.8]{iterations, $k$}
\psfrag{ylabel}[cc][][1]{$\bar{r}_k$}
\psfrag{Coop.PPB}[cc][][.45]{Coop.~PPB}
\psfrag{Coop.POCS}[cc][][.45]{Coop.~POCS}
\psfrag{Coop.PPM}[cc][][.45]{Coop.~PPM}
\label{fig:5ref_60tar_LOS}
\includegraphics[width=80mm]{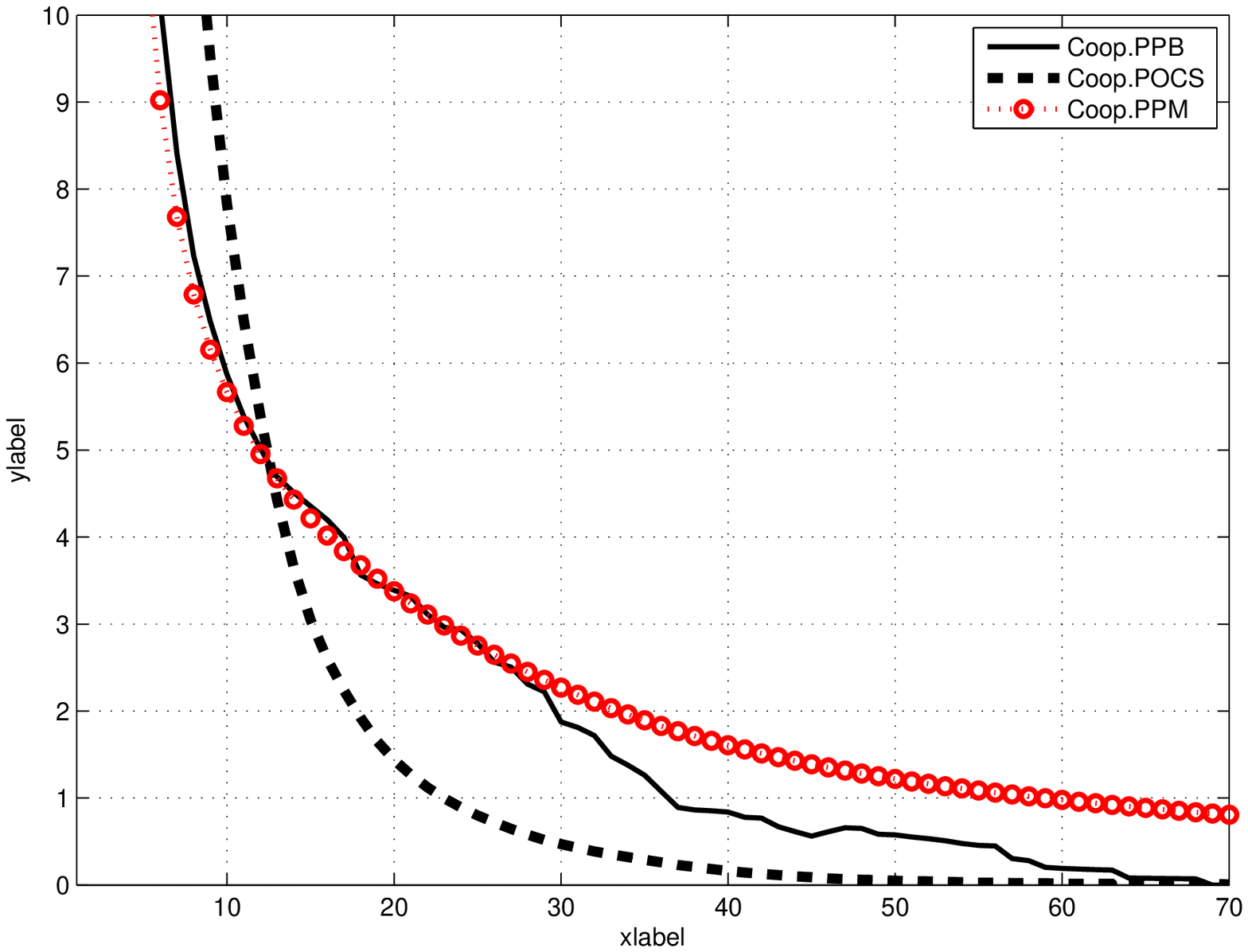}}
\subfigure[]{
\psfrag{xlabel}[cc][][.8]{iterations, $k$}
\psfrag{ylabel}[cc][][1]{$\bar{r}_k$}
\psfrag{Coop.PPB}[cc][][.45]{Coop.~PPB}
\psfrag{Coop.POCS}[cc][][.45]{Coop.~POCS}
\psfrag{Coop.PPM}[cc][][.45]{Coop.~PPM}
\label{fig:5ref_100tar_LOS}
\includegraphics[width=80mm]{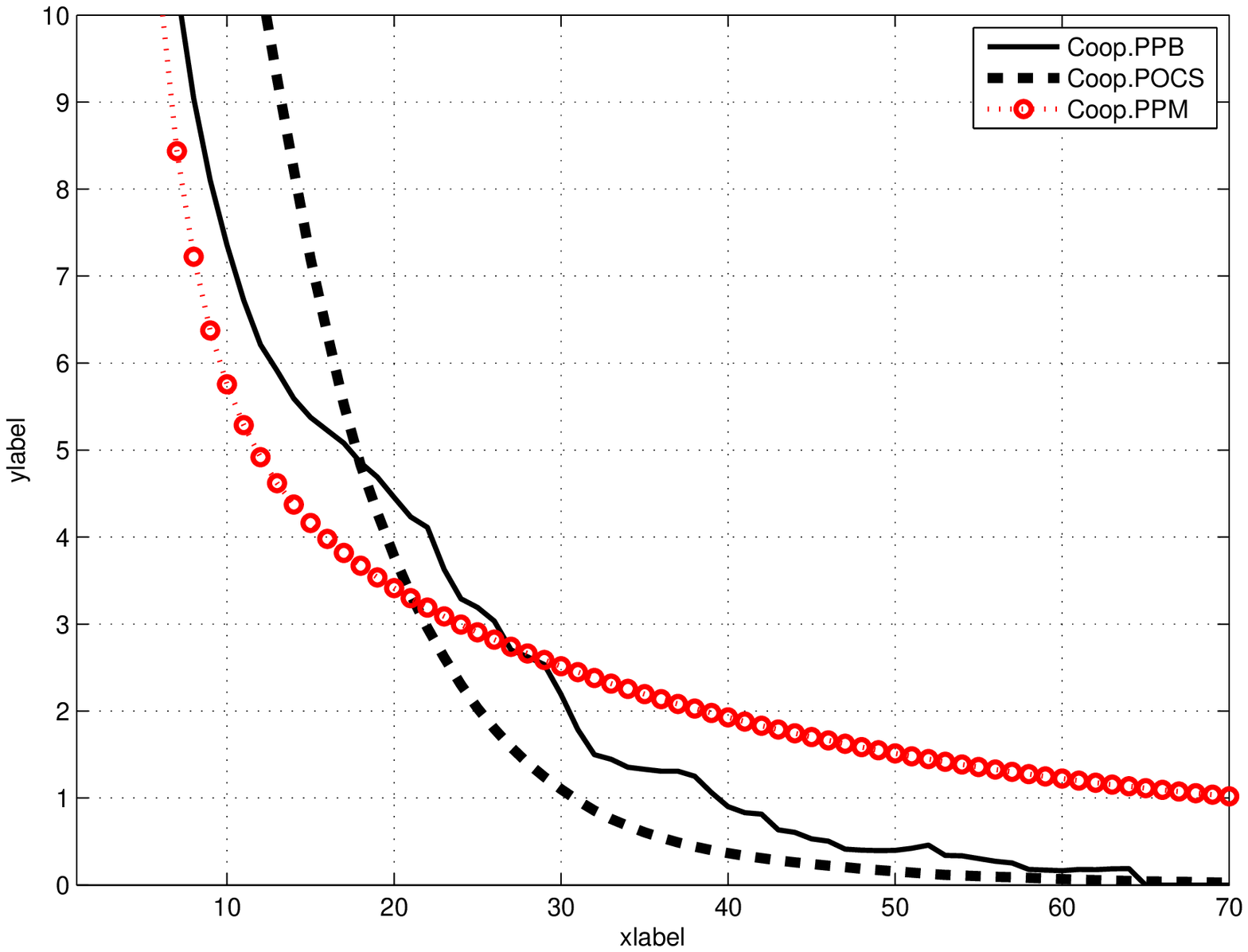}}
\caption{Convergence speed of average residuals $\bar{r}_k=1/(nN)\sum_{m=1}^N\|\mathbf{x}^{k}_m-\mathbf{x}^{k-1}_m\|$ versus the number of
iterations, $k$, for different algorithms in LOS conditions for five reference nodes and \subref{fig:5ref_20tar_LOS} 20 target nodes,
\subref{fig:5ref_30tar_LOS} 30 target nodes,
\subref{fig:5ref_60tar_LOS} 60 target nodes, and \subref{fig:5ref_100tar_LOS} 100 target nodes.}%
\label{fig:conv_rate}%
\end{figure*}

In this section, we evaluate the convergence speed of the
above mentioned algorithms through simulations. We consider the LOS
scenario for the network shown in Fig.~\ref{fig:network}.  The
convergence speed is plotted in Fig.\,\ref{fig:conv_rate} for
different numbers of target nodes. For Coop.~POCS, we use $5D_i$
iterations for locally updating target node $i$, where $D_i$ is the
number of sensor nodes (reference or localized target) with known or
estimated location connected to target $i$. That is, for target $i$, we
continue sequential projection onto $D_i$ balls corresponding to the nodes
connected to target $i$. We first set the relaxation parameters equal
to one and then decrease them to the value used
in~\cite{Gholami_Eurasip_2011} after $3D_i$. In the figure, the
$x$-axis shows the number of iterations for updating the vector
$\mathbf{x}^{k}=\left({x}_{1}^k,x_{2}^k,\ldots,{x}_{n}^{k}\right)$.
It is observed that all algorithms converge after a few iterations.
The Coop.~POCS uses more local updatings, and it shows faster global
convergence compared to the two other methods.
Finally, from this figure we also see that the convergence rate of
Coop.~PPB may not be monotone, while Coop.~PPM and Coop.~POCS show
monotonic convergence.  {In fact, the Coop.~PPB,
  objective function is nonconvex, and we need to ensure that the
  starting point is sufficently close to the global solution, since
  the algorithm might otherwise converge to a local minimum. We also
  note that the proposed algorithm has relatively slow convergence
  after, say, 40--50 iterations. However, monotonic convergence is
  guaranteed by Theorem~9.}

\section{Conclusions}
In this paper, we have considered a geometric interpretation of the
cooperative positioning problem in wireless sensor networks and
formulated the position problem as an implicit convex feasibility
problem.  We have then proposed a distributed algorithm based on
projections approach to solve the problem.  The proposed algorithm
enjoys parallel implementation capabilities and is suitable for
practical scenarios.  We have also proven that the algorithm converges
to the desired minimizer of an intuitively pleasing cost function,
\eqref{problem}, regardless if the implicit convex feasibility problem
is consistent or not.  Simulation results show an enhanced performance
of the proposed approach compared with available algorithms for sparse
networks.  {Since the proposed algorithm has low
  convergence speed, one possible open problem for future studies is
  to improve the convergence speed of the algorithm, while maintaining
  convergence.}

\section{Acknowledgments}
{We would like to thank the editor and anonymous
  reviewers for the valuable comments and suggestions which improved
  the quality of the paper.}  The work of M.~R.~Gholami and
E.~G.~Str\"om was  supported by the Swedish Research Council (contract
no.~2007-6363).  The work of L.~Tetruashvili and Y.~Censor was supported by United States-Israel Binational Science
Foundation (BSF) Grant number 200912 and by US Department of Army
Award number W81XWH-10-1-0170.

%
%
%
%
%
%

\footnotesize
\bibliographystyle{IEEEtran}

\bibliography{Ref}

\end{document}